\newcommand{\pcb}{polynomially $\chi$-bounded}
\declaretheorem[name=Lemma, numberwithin = section]{lemma}
\declaretheorem[name=Theorem,sibling = lemma]{theorem}
\declaretheorem[name=Proposition, sibling=lemma]{proposition}
\declaretheorem[name=Corollary, sibling=lemma]{corollary}
\declaretheorem[name=Claim]{claim}
\begin{document}

\begin{frontmatter}[classification=text]


\author[rom]{Romain Bourneuf}
\author[stef]{Stéphan Thomassé\thanks{This work was supported by ANR Digraphs
ANR-19-CE48-0013
}}

\begin{abstract}
We show that every graph with twin-width $t$ has chromatic number $O(\omega ^{k_t})$ for some integer $k_t$, where $\omega$ denotes the clique number. This extends a quasi-polynomial bound from Pilipczuk and Soko\l{}owski and generalizes a result for bounded clique-width graphs by Bonamy and Pilipczuk. The proof uses the main ideas of the quasi-polynomial approach, with a different treatment of the decomposition tree. In particular, we identify two types of extensions of a class of graphs: the delayed-extension (which preserves polynomial $\chi$-boundedness) and the right-extension (which preserves polynomial $\chi$-boundedness under bounded twin-width condition). Our main result is that every bounded twin-width graph is a delayed extension of simpler classes of graphs, each expressed as a bounded union of right extensions of lower twin-width graphs.
\end{abstract}
\end{frontmatter}


\section{Introduction}

One of the first questions regarding the chromatic number $\chi (G)$ of a graph $G$ is how it compares to the clique number $\omega (G)$. Indeed, while $\omega (G)\leq \chi (G)$, early constructions by Blanche Descartes \cite{BD54}, Zykov \cite{ZY52} and Mycielski \cite{MY55} show that there are triangle-free graphs with arbitrarily large $\chi$. This question was also one of the key motivations for the introduction of the probabilistic method by Erd\H{o}s \cite{ER59}, to get randomized constructions of graphs with large girth and large chromatic number.

\subsection{Polynomial $\chi$-bounded classes}
A hereditary class $\cal C$ of graphs (or just \emph{class} in this paper, i.e. closed under induced subgraphs)  is \emph{$\chi$-bounded} if there exists a function $f$ such that $\chi(G)\leq f(\omega (G))$ for every $G\in \cal C$. For instance, the well-known class of \emph{perfect graphs} is the (hereditary) class of graphs such that $f(x)=x$. When $f$ can be chosen as a polynomial, the class $\cal C$ is 
\emph{polynomially $\chi$-bounded}. Let us also say that $\cal C$ is \emph{$k$-initially $\chi$-bounded} if $f(i)$ exists for all $i\leq k$. 
Recently, Carbonero, Hompe, Moore and Spirkl~\cite{CHMS23} showed that there are classes which are $2$-initially $\chi$-bounded but not $\chi$-bounded. 
This result was then extended to arbitrary values of $k$, independently by Bria\'nski, Davies and Walczak~\cite{BDW22} and by Girão, Illingworth, Powierski, Savery, Scott, Tamitegama and Tan~\cite{GIPSSTT24}. Notably, there are classes which are $\chi$-bounded but not polynomially $\chi$-bounded, see~\cite{BDW22}. The field is developing at a very fast pace: for a recent survey, see Scott and Seymour~\cite{SS20}.

Polynomial $\chi$-boundedness is one of the tamest behaviours a class can have. The most natural example of \pcb~class is the class of perfect graphs, which can be defined by forbidden induced subgraphs (odd holes and antiholes as the Strong Perfect Graph Theorem asserts~\cite{CRST06}) and also admit some structural decompositions. This leads to three main directions of research: Which forbidden induced subgraphs give polynomial $\chi$-boundedness? 
Which structural parameters yield polynomial $\chi$-boundedness? Which operations preserve $\chi$-boundedness? These three questions are intimately connected. The simplest graphs which are \pcb, cographs, are altogether $P_4$-free graphs, have clique-width at most 2 and are the closure under substitutions of graphs of size at most 2.

From the forbidden induced subgraph perspective, the Strong Perfect Graph Theorem, proved by Chudnovsky, Robertson, Seymour
and Thomas~\cite{CRST06} is definitely the masterpiece of the field. One of the most influential questions in the domain was proposed by Gy\'arf\'as and Sumner: is the class of graphs excluding some fixed induced tree $\chi$-bounded? This is true for paths, but the $\chi$-bounding function is not known to be polynomial. Scott, Seymour and Spirkl~\cite{SSS23} showed that sparse graphs (i.e. with no $K_{t,t}$ subgraph) excluding a tree are \pcb. Liu, Schroeder, Wang and Yu showed it for $t$-broom free graphs~\cite{LSWY21}. 
Davies and McCarty~\cite{DM21} proved that circle graphs are polynomially $\chi$-bounded.

From the point of view of operations preserving (polynomial) $\chi$-boundedness, the landscape is less developed. Even deciding if modules can be safely contracted requires a careful argument. To this end,
Chudnovsky, Penev, Scott, and Trotignon~\cite{CPST13} showed that if a class is \pcb, its closure under substitutions is also \pcb. It is also natural to wonder whether combining two closure operations each preserving $\chi$-boundedness still preserves $\chi$-boundedness. This is unfortunately not always the case as substitutions and 2-cuts lead to triangle-free graphs with arbitrarily large $\chi$~\cite{BBDGT23}. On the positive side, Dvo\v{r}\'ak and  Kr\'al’ also showed in~\cite{DK12} that closure by bounded rank cuts also  preserves $\chi$-boundedness. For this, they introduced a new tree-decomposition which is reminiscent of our delayed decomposition tree.

\subsection{Twin-width}
With the exception of VC-dimension, classes of graphs with bounded complexity parameter are usually $\chi$-bounded. This is the case for bounded tree-width, as it implies bounded degeneracy, but also the case for rank-width, as shown in~\cite{DK12}. Building on this result,
Bonamy and Pilipczuk~\cite{BP19} showed that graphs with bounded clique-width are \pcb.

Twin-width was introduced in~\cite{BKTW21} as a new structural complexity measure, capturing at the same time minor-closed classes, strict classes of permutations and bounded clique-width classes.
It was shown in~\cite{BGKTW21} that graphs with bounded twin-width are $\chi$-bounded. In the same paper, a polynomial bound was posed as an open problem, which would extend the polynomial $\chi$-boundedness of graphs with bounded clique-width~\cite{BP19}.

A natural step when trying to achieve polynomial bounds is to first look for quasi-polynomial ones, that is of order $n^{\log^c n}$. For instance, $P_5$-free graphs are quasi-polynomially $\chi$ bounded, see~\cite{SSS22}. In their breakthrough result, Pilipczuk and Soko\l{}owski~\cite{PS22} showed the following result:

\begin{theorem}\label{th:PS}
For every $t \in \mathbb{N}$ there is a constant $\gamma_t$ such that every graph with twin-width at most $t$ and clique number $\omega$ has chromatic number bounded by $2^{\gamma_t \log^{4t+3}\omega}$.
\end{theorem}

Reaching quasi-polynomiality often requires the right tools and definitions, and adding a little bit more of structure can sometimes save the logarithmic term.
This paper is no exception, as the fundamental idea (a reduction to lower twin-width), was already proposed by Pilipczuk and Soko\l{}owski~\cite{PS22}. Their twin-width reduction is based on a clever definition, $d$-almost mixed minors, which are particularly well-behaved for induced subgraphs. 
They use the fact that twin-width is functionally equivalent to finding a particular vertex ordering $v_1,\dots ,v_n$ for which the adjacency matrix cannot be partitioned into $k\times k$ mixed blocks, see~\cite{BKTW21}. Here mixed means that there are at least two distinct rows and 
two distinct columns. However, this definition is too constrained and they relaxed it by lifting the condition for the diagonal blocks ($k$-almost mixed minors). The difficult technical part of their approach is to partition $G$ into some restrictions $G[v_i,\dots,v_j]$ which do not ``contain" a $k-1$-almost mixed minor, in order to apply induction. The ``containment" notion is a very clever mix of edge-partition and vertex-quotient. This is the central reduction of their argument.
Our proof also uses this reduction and inserts it inside a different analysis of the decomposition tree. As a result we obtain:

\begin{theorem}\label{th:BT}
For every $t \in \mathbb{N}$ the class of graphs with twin-width at most $t$ is polynomially $\chi$-bounded. 
\end{theorem}

\subsection{Overview of the proof}
The first step is to consider a decomposition tree analogous to the substitution tree used for modules. The idea is simple: we first start with the partition $$\{\{v_1, \ldots, v_{\lceil n/2 \rceil}\}, \{v_{\lceil n/2 \rceil + 1}, \ldots, v_n\}\}$$ and we greedily continue to partition every part $B$ into modules with respect to the outside of $B$. If we reach a real module, we c
ut it in half and iterate. We now have a decomposition tree $T_d$ whose leaves are the vertices of $G$.

The next step is to observe that we can structure $T_d$ (i.e. associate a graph $g(x)$ to every internal node $x$) so that the information $(T,g)$ is enough to retrieve $G$. This is our \emph{delayed decomposition}. The key-fact is that if the class of graphs $g(x)$ is \pcb, one can derive polynomial $\chi$-boundedness for $G$. This does not involve twin-width and is a general decomposition method: delayed extensions preserve (polynomial) $\chi$-boundedness.

When $G$ is $k$-almost-mixed free, we can argue as in~\cite{PS22} that all $g(x)$ are ``simpler". Here again, we encapsulate the argument into a general framework, \emph{right extensions}, for which we prove that they preserve $\chi$-boundedness for all graphs, and polynomial $\chi$-boundedness for bounded twin-width graphs.

To sum up our approach: we show that $d$-almost mixed free graphs are delayed extensions of (vertex and edge unions of) right extensions of $d-1$-almost mixed free graphs and bounded $\chi$ graphs.

\subsection{Future directions}

In this paper, we stress the roles of two main operations: delayed extensions and right extensions, which apply to general graphs. Whether they can be used for other classical problems on $\chi$-boundedness is left for future research.

The fact that delayed extensions preserve polynomial $\chi$-boundedness directly results from the stability of polynomial $\chi$-boundedness by substitution. Since we crucially need it, we take a closer look at the proof from \cite{CPST13}. We slightly simplify the argument and get a better bound, again by tweaking the decomposition tree, but the whole argument is still surprisingly non trivial.

We also describe a new operation, the mixed extension, which appears naturally in a proof of the $\chi$-boundedness of bounded twin-width classes. A mixed subgraph of an ordered graph $G$ is obtained by only keeping the edges between mixed pairs of intervals, for some vertex-partition of $G$ into intervals. When $G$ has bounded twin-width (and the order is $d$-mixed free), every mixed subgraph has bounded chromatic number (using degeneracy via Marcus-Tardos). We show more generally that if the class $\cal C'$ of mixed subgraphs of a class $\cal C$ is $\chi$-bounded, then $\cal C$ is also $\chi$-bounded. Thus, mixed extensions preserve $\chi$-boundedness.

\section{First operation: Delayed extension}

Given a positive integer $s$, we denote by $[s]$ the set $\{1, 2, \ldots, s\}$. If $G$ is a graph, $\chi(G)$ denotes its chromatic number and $\omega(G)$ its clique number. When $X$ is a subset of vertices, we denote by $G[X]$ the graph induced by $G$ on $X$. We recall that $X\subseteq V(G)$ is a \emph{module} of $G$ if for every $y\in V(G)\setminus X$ we have all edges between $y$ and $X$, or no edge between $y$ and $X$.

We first show that every graph $G$ on vertex set $v_1, \ldots, v_n$ has a canonical decomposition tree $T_d$, called \emph{delayed decomposition tree}. Its leaves are the vertices of $G$, and every internal node $x$ of $T_d$ is labelled by a graph $G_x$ defined on the grandchildren of $x$. It is the analogue of the usual decomposition tree for modules, where every $G_x$ is defined on children instead of grandchildren (hence the "delayed"). 

The tree $T_d$ is defined via a sequence of refining partitions $P_0,P_1,\dots ,P_k$ of the vertex set $V$ of $G$ starting with the root vertex $P_0=\{V\}$ and ending on $P_k=\{\{v_1\}, \ldots, \{v_n\}\}$, the partition into singletons corresponding to the vertices. We now describe how to construct the refinement $P_{i+1}$ of $P_i=\{B_1, \ldots, B_m\}$, that is how a part $B_j$ is further refined. In this process, all parts will consist of consecutive vertices in the ordering $v_1, \ldots, v_n$.

For this, we define a partition $P(I)$ of an arbitrary interval of consecutive vertices $I=\{v_i,\dots, v_j\}$.

\begin{itemize}
    \item If $i=j$, it is not refined.
    \item If $I$ is a module, we divide it into two parts $\{v_i,\dots ,v_{\lfloor (i+j)/2\rfloor}\}$ and $\{v_{\lfloor (i+j)/2\rfloor+1},\dots, v_j\}$.
    \item If $I$ is not a module, we partition it into maximal subsets $S$ of consecutive vertices such that $S$ is a module in $G[(V\setminus I)\cup S]$. We call these parts \emph{local modules}.
   In other words $v_s,v_{s+1}\in I$ are in the same local module of $I$ if and only if they have the same neighbours in $V\setminus I$.
\end{itemize}

 To form the refinement $P_{i+1}$ of $P_i=\{B_1, \ldots, B_m\}$, we just refine every $B_j$ into $P(B_j)$. Observe that since $P_0=\{V\}$ is a module, we have $P_1 = \{\{v_1, \ldots, v_{\lceil n/2 \rceil}\}, \{v_{\lceil n/2 \rceil + 1}, \ldots, v_n\}\}$. We stop the process when $P_k=P_{k-1}$, which happens when all parts are singletons. For technical reasons (made clearer in the next definition) we keep the two identical partitions into singletons $P_{k-1},P_k$ instead of simply stopping at step $k-1$. The partition of a module into two (near) equal intervals is arbitrary, we could for instance partition $I$ into $\{v_i\}$ and $\{v_{i+1},\dots, v_j\}$.
 
We next consider the tree $T_d$ corresponding to this decomposition process, where the nodes at depth $i$ correspond to the parts of $P_i$, and the children of a node $x\in P_i$ are the parts $y\in P_{i+1}$ such that $y\subseteq x$ (we usually identify the nodes of $T_d$ to subsets of $V$). The leaves of $T_d$ are the vertices $v_i$ of $G$. Moreover, the parent of a leaf $v_i$ is $v_i$ since $P_{k-1}=P_k$. We now describe how to structure $T_d$ as a \emph{delayed tree decomposition} in order to encode the graph $G$. The crucial remark here is that if a node $x$ of $T_d$ has two grandchildren $y,z$ which are not siblings (we say that $y,z$ are \emph{cousins}, i.e. their parents are distinct), then we have all edges between $y$ and $z$, or no edge between them. In other words, cousins are modules with respect to each other.

From this observation, we define a function $g$ associating to every node $x\in T_d$ a graph $G_x:=g(x)$ whose vertex set is the set of  grandchildren of $x$ and such that $yz$ is an edge of $G_x$ if $y,z$ are cousins and there exists an edge between $y$ and $z$ in $G$ (and thus $y$ is fully joined to $z$). 

Given a pair $(T,g)$, now simply seen as a rooted tree $T$ in which the parent of every leaf only has one child, and each $g(x)$ is a graph on the grandchildren of $x$ (if any, otherwise $g(x)$ is empty), we define the \emph{realization} $R(T,g)$ as the graph such that: 
\begin{itemize}
    \item its vertex set is the set of leaves $L$ of $T$,
    \item two vertices $x,y\in L$ are joined by an edge if, given that $z$ is their closest ancestor in $T$ and $x',y'$ are the respective grandchildren of $z$ which are the ancestors of $x,y$, the edge $x'y'$ belongs to $g(z)$.
\end{itemize}

The crucial observation is that $G$ is equal to $R(T_d,g)$, hence every graph can be expressed as a delayed decomposition tree. Given a class of graphs $\cal C$, we denote by ${\cal C}_d$ the class of graphs $G$ admitting a delayed tree decomposition $(T_d,g)$ (for some enumeration of their vertex set) in which all graphs $g(x)$ belong to $\cal C$. We call ${\cal C}_d$ the \emph{delayed extension} of $\cal C$. It is not strictly speaking a closure since applying it twice can produce more graphs than applying it once.

The delayed extension is very similar to the \emph{substitution closure} ${\cal C}_s$ of $\cal C$, in which $G\in {\cal C}_s$ if all its induced \emph{prime} subgraphs $H$ (i.e. with no modules of size $k$ where $1<k<|V(H)|$) belong to $\cal C$. We have ${\cal C}_s\subseteq {\cal C}_d$, but we shall not need it. Conversely, we can express ${\cal C}_d$ in terms of ${\cal C}_s$. We invite the reader not familiar with tree-decompositions with respect to substitutions to read the Section~\ref{sec:subs} of this paper.

\begin{figure}[h]
\centering
\begin{tikzpicture}
  [level 1/.style={sibling distance=20mm},
   level 2/.style={sibling distance=7mm}]
  \node[circle, fill=brown]{}
    child{node[circle, fill=green]{} child {node[circle, draw=black](1){} child{node[circle, fill=black, label=below:A](6){}}} child {node[circle, draw=black](2){} child{node[circle, fill=black, label=below:B](7){}}} child {node[circle, draw=black](3){} child{node[circle, fill=black, label=below:C](8){}}}}
    child{node[circle, fill=cyan]{} child {node[circle, draw=black](4){} child{node[circle, fill=black, label=below:D](9){}}} child {node[circle, draw=black](5){} child{node[circle, fill=black, label=below:E](10){}}}};
    \draw[color=brown] (2) to[out=45,in=135]  (5);
    \draw[color=brown] (3) to (4);
    \draw[color=green] (6) to[out=45,in=135]  (8);
    \draw[color=green] (6) to (7);
    \draw[color=cyan] (9) to (10);
\end{tikzpicture}
\qquad \qquad
\begin{tikzpicture}
\node[circle, fill, label=above:A] (A) at ({72*0+90}:2){};
\node[circle, fill, label=left:C] (C) at ({72*1+90}:2){};
\node[circle, fill, label=left:D] (D) at ({72*2+90}:2){};
\node[circle, fill, label=right:E] (E) at ({72*3+90}:2){};
\node[circle, fill, label=right:B] (B) at ({72*4+90}:2){};
\draw[color=green] (A) -- (B);
\draw[color=brown] (E) -- (B);
\draw[color=cyan] (E) -- (D);
\draw[color=brown] (D) -- (C);
\draw[color=green] (A) -- (C);
\end{tikzpicture}
\caption{A delayed decomposition tree of $C_5$ and its corresponding realization. The edges of every $g(x)$ are drawn in the same color as $x$. Note that all $g(x)$ are cographs.}
\end{figure}
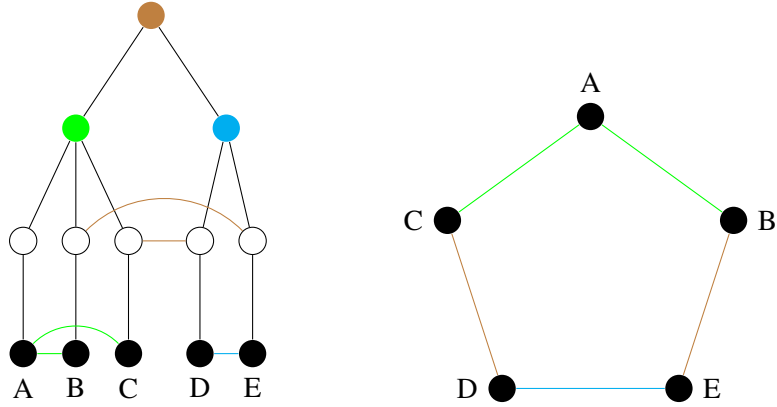

\begin{lemma}\label{lem:delayedpartition}
Every graph in ${\cal C}_d$ is the edge union of two graphs in ${\cal C}_s$.
\end{lemma}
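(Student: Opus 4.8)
The plan is to split the edges of $G$ according to the parity of the depth at which the realization ``decides'' each edge, and to recognize each of the two resulting graphs as the realization of an ordinary substitution tree whose node-labels are among the graphs $g(x)$. This should work because $g(x)$ lives on the \emph{grand}children of $x$, so contracting every second level of the decomposition tree turns the delayed labelling into a genuine substitution labelling; the price is that the edges get handed out to two different trees.

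I would first fix a delayed decomposition $(T,g)$ of $G$ with all $g(x)\in\mathcal C$, taking it to be the canonical tree $T_d$ for some enumeration of $V(G)$, so that $g(x)$ has no edge between two siblings and, if $|V(G)|\ge 2$, the root of $T$ has exactly two children $B_1,B_2$. For $u\ne v$ write $z(u,v)=\operatorname{lca}_T(u,v)$; by the definition of $R(T,g)=G$, the adjacency of $u$ and $v$ in $G$ depends only on $g(z(u,v))$ evaluated on the two grandchildren of $z(u,v)$ lying above $u$ and $v$, and those grandchildren are always distinct and have distinct parents. Then I would set $E_i=\{uv\in E(G):\operatorname{depth}(z(u,v))\equiv i\ (\mathrm{mod}\ 2)\}$ and $G_i=(V(G),E_i)$ for $i\in\{0,1\}$, so that $E(G)=E_0\sqcup E_1$ and $G$ is the edge union of $G_0$ and $G_1$; it then suffices to show $G_0,G_1\in\mathcal C_s$.

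For $G_0$, I would take the rooted tree $S_0$ whose nodes are the even-depth nodes of $T$, where the children of a node $z$ in $S_0$ are declared to be the grandchildren of $z$ in $T$; its leaves are the even-depth nodes of $T$ with no grandchildren, and one checks that each $v\in V(G)$ occurs as exactly one such leaf (when the $T$-leaf $v$ sits at odd depth one takes its parent instead, using that singletons persist unchanged to the bottom of $T$). Label each internal node $z$ of $S_0$ by $q_0(z):=g(z)$, a graph on the grandchildren of $z$ in $T$, i.e.\ on the children of $z$ in $S_0$; every label is in $\mathcal C$. The core step is to verify $R(S_0,q_0)=G_0$: the $S_0$-ancestors of a leaf $u$ are exactly its even-depth $T$-ancestors, so if $\operatorname{depth}(z(u,v))$ is even then $\operatorname{lca}_{S_0}(u,v)=z(u,v)$ and the realization rules of $S_0$ and of $T$ read off the same entry of $g(z(u,v))$, whereas if $\operatorname{depth}(z(u,v))$ is odd then $\operatorname{lca}_{S_0}(u,v)$ is the $T$-parent $p$ of $z(u,v)$, and the two children of $p$ in $S_0$ that lie above $u$ and $v$ are two distinct children of $z(u,v)$ in $T$, hence siblings, hence non-adjacent in $g(p)$ — so $uv\notin E(R(S_0,q_0))$, exactly matching the definition of $E_0$. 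Since $\mathcal C$ is hereditary and all labels of $S_0$ lie in $\mathcal C$, the standard argument gives $G_0=R(S_0,q_0)\in\mathcal C_s$: a prime induced subgraph $H$ of $R(S_0,q_0)$ on at least three vertices is, by modularity, an induced subgraph of the label of the $S_0$-least common ancestor of $V(H)$, which lies in $\mathcal C$; and prime subgraphs on at most two vertices are induced subgraphs of some label.

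For $G_1$ I would repeat the construction from the odd-depth nodes of $T$, adding one node at the top: since every edge of $G$ between $V(B_1)$ and $V(B_2)$ has $z(u,v)$ equal to the root (depth $0$), hence lies outside $G_1$, I let $S_1$ have a new root with children $B_1,B_2$ labelled by the $2$-vertex edgeless graph $K_2^{c}$, and continue below each $B_i$ exactly as in $S_0$; the same verification yields $R(S_1,q_1)=G_1$, whose labels are $K_2^{c}$ together with graphs $g(\cdot)\in\mathcal C$. If $G$ is not complete, some $g(z)$ contains a non-edge, so $K_2^{c}\in\mathcal C$ by heredity and hence $G_1\in\mathcal C_s$; if $G$ is complete the claim is immediate, since then the only prime induced subgraphs of $G$ are $K_1$ and $K_2$ (both in $\mathcal C$, with $K_2$ an induced subgraph of some $g(z)$ when $|V(G)|\ge 2$), so $G\in\mathcal C_s$ and $G=G\cup G$ already works. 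The step I expect to be the main obstacle is the realization-equality $R(S_i,q_i)=G_i$: one has to track carefully how the least common ancestor in the contracted tree $S_i$ relates to the one in $T$, handling the $T$-leaves whose depth has the wrong parity (they become leaves of $S_i$ one level higher) and the degenerate one-child chains at the bottom of $T$. The idea itself — that a labelling two levels deep becomes an honest substitution labelling once alternate levels are contracted — is clean; the work lies in the index bookkeeping and in checking that siblings of $T$ turn into harmlessly non-adjacent siblings of $S_i$.
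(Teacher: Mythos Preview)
Your proposal is correct and takes essentially the same approach as the paper: both split the edges of $G$ by the parity of the depth of the deciding node (the $T$-least common ancestor) and then show each half lies in $\mathcal{C}_s$. The paper's execution is a bit leaner—rather than contracting alternate levels into explicit substitution trees $S_0,S_1$, it keeps $T$ and simply replaces $g$ by $g_o$ (resp.\ $g_e$), setting $g(x)$ to the edgeless graph at even (resp.\ odd) depths, and then argues directly that every odd-depth node is a module of $R(T,g_o)$, so any prime induced subgraph sits inside a single $g(x)$; this sidesteps the leaf-parity bookkeeping and the $K_2^c$-at-the-root case that you correctly flag as the fiddly parts of your version.
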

\begin{proof}
Let $G:=R(T,g)$ be a graph such that $g(x)\in \cal C$ for all nodes of $T$. We consider a function $g_o$ (o for odd)  defined on $T$ such that  $g_o(x)=g(x)$ if $x$ has odd depth in $T$ and $g_o(x)$ is the edgeless graph (on the grandchildren of $x$) if $x$ has even depth. We define analogously $g_e$ in which $g_e(x)=g(x)$ if $x$ has even depth, and is edgeless otherwise. We define $G_o:=R(T,g_o)$ and $G_e:=R(T,g_e)$.

By construction, $G$ is the edge union $G_o\cup G_e$. We now show that $G_o$ is in ${\cal C}_s$. Observe that every node $x$ of $T$ with odd depth  (seen as a subset of vertices of $G$) is a module of $G_o$. Indeed, for every vertex $v\notin x$, the subset $x$ is a module of $G_o[x\cup v]$: this is by definition of $G_o$ if $v$ is not a descendant of a sibling of $x$, and if $v$ is a descendant of a sibling of $x$ there is no edge between $v$ and $x$ since their closest ancestor, the parent of $x$, has even depth.
In particular, if $H$ is a prime induced subgraph of $G_o$, we consider the deepest node $x$ with odd depth such that $V(H)\subseteq x$, and since $|V(H)\cap y|\leq 1$ for every grandchild $y$ of $x$ ($H$ is prime), we have that $H$ is an induced subgraph of $g(x)$, hence $H\in \cal C$. The same argument holds for $G_e$.
\end{proof}

We now recall the result of Chudnovsky, Penev, Scott and Trotignon~\cite{CPST13}.

\begin{theorem}\label{th:cpst}
If $\cal C$ is polynomially $\chi$-bounded with function $\omega^k$, then ${\cal C}_s$ is $\chi$-bounded by $\omega^{3k+11}$.
\end{theorem}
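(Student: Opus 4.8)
The plan is to follow the standard strategy: take $G \in {\cal C}_s$, fix its substitution decomposition tree $T$ (the tree whose internal nodes are labelled by prime graphs or by cliques/independent sets, with the prime labels lying in $\cal C$), and bound $\chi(G)$ by induction on the tree. The key parameters to track are $\omega := \omega(G)$ and the chromatic number; the target is $\chi(G) \le \omega^{3k+11}$. First I would recall the two ``easy'' node types: at a series node (join of the children's graphs) clique numbers add, so we must be careful, while at a parallel node chromatic numbers are simply the maximum over children. The real work is at a prime node $x$, where $G$ restricted to that node is obtained from a prime graph $H = g(x) \in \cal C$ by substituting, into each vertex $h$ of $H$, an already-coloured piece $G_h$ with some clique number $\omega_h \le \omega$ and some chromatic number $c_h$ already bounded by the inductive hypothesis applied lower in the tree.

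The core lemma to establish is the ``weighted'' colouring step: if $H \in \cal C$ and we assign to each vertex $h$ of $H$ a positive integer weight $\omega_h$, then $H$ with vertex $h$ blown up to a clique of size $\omega_h$ has clique number some $\Omega$, and I want a proper colouring of the blown-up graph using few colours as a function of $\Omega$ — ideally $\mathrm{poly}(\Omega)$ with exponent linear in $k$. The clean way to get this is the classical ``bucketing by weight'' trick: partition the vertices of $H$ into $O(\log \Omega)$ groups $H_j$ according to $\lfloor \log \omega_h \rfloor$; on each group the weights are within a factor $2$ of each other, so $\omega(H[\text{group } j])$ is within a factor $2$ of $\Omega / 2^j$, hence $\le$ roughly $\Omega/2^j$, and since $H[\text{group }j] \in \cal C$ it is properly colourable with $(\Omega/2^j)^k$ colours; blowing up by $\le 2 \cdot 2^j$ costs a factor $2^{j+1}$, giving about $\Omega^k / 2^{(k-1)j}$ colours for group $j$. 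Summing a geometric series over $j$ gives $O(\Omega^k)$ total, but the groups must get disjoint colour sets across the $O(\log \Omega)$ buckets, which a priori multiplies by $\log \Omega$; the trick to kill the logarithm (this is where the improved bound over a naive argument comes from, and where I'd ``tweak the decomposition tree'' as the introduction promises) is to not take disjoint palettes blindly but to observe the geometric decay lets you reuse colours, or to handle the buckets as a product colouring paying only $O(1)$ extra exponent. This yields a bound like $\chi \le c\,\Omega^{k+1}$ or $\Omega^{k+2}$ for a single prime substitution step.

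Then I would run the induction down $T$. The subtlety is that $\omega$ is a global bound but the clique number *at a node* can be smaller, and worse, chromatic numbers do not simply add under series composition while clique numbers do; the standard fix is to prove the stronger statement by induction that $\chi(G[x]) \le \omega(G[x])^{3k+11}$ for every node $x$, so that at a series node with children of clique numbers $\omega_1, \dots, \omega_r$ summing to $\Omega' \le \omega$, we get $\chi \le \sum_i \omega_i^{3k+11} \le (\sum_i \omega_i)^{3k+11} = {\Omega'}^{3k+11}$ using superadditivity of $t \mapsto t^{3k+11}$ on positive integers — this is exactly why the exponent, not just a constant factor, matters and why one wants the bound in the clean form $\omega^{\text{const}}$. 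At a prime node, feeding the inductive bounds $c_h \le \omega_h^{3k+11}$ into the weighted colouring lemma above: the blown-up prime graph behaves like $H$ with weights $\omega_h^{3k+11}$, clique number $\le \omega^{3k+11}$ in the naive reading but actually one must re-extract that the relevant ``$\Omega$'' for the lemma is $\omega$ itself (clique number is realized inside some blob or across a clique in $H$), and the lemma gives a bound polynomial in $\omega^{3k+11}$ with small extra exponent — one checks $(3k+11) \cdot (\text{small factor}) \le 3k+11$ fails, so actually the bookkeeping must be: prime step costs exponent $k$ on top of a ``width'' quantity, and one arranges $3k+11$ to absorb one prime application plus the series/parallel overhead. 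Parallel nodes are free ($\chi = \max$).

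The main obstacle I expect is precisely this exponent bookkeeping at the prime node: showing that the weighted/bucketed colouring of a blown-up prime graph from $\cal C$, where the blobs themselves carry chromatic number up to $\omega^{3k+11}$, can still be coloured with only $\omega^{3k+11}$ colours — i.e. that a single substitution layer plus the logarithmic bucketing overhead plus the series superadditivity all fit inside the gap between $k$ (the cost of one application of the $\cal C$-bound) and $3k+11$. This is delicate because it must be uniform over the unbounded depth of $T$, so the invariant has to be exactly tight under one step; getting the constant $11$ (rather than something that blows up per level) is the whole content, and it requires the geometric-series argument to contribute only an additive constant to the exponent, which in turn is what forces the careful, non-obvious reorganization of the decomposition tree alluded to in the introduction.
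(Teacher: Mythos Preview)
Your plan contains the right ingredients (bucketing by $\lfloor \log \omega_h \rfloor$, geometric decay, superadditivity at joins) and you have honestly located the obstruction: a single prime step, done naively, produces $\chi \le c\,\omega^{3k+11}$ for some absolute $c>1$, and this constant compounds over the depth of $T$, so the induction does not close. You say this ``forces the careful, non-obvious reorganization of the decomposition tree'' but you do not supply that reorganization, so as it stands the plan is incomplete.

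The missing idea, which is the actual content of the argument in Section~\ref{sec:subs}, is that one must \emph{not} process prime nodes one at a time. Instead one extracts, from the full decomposition tree $T$, the subtree $X$ of ``heavy'' nodes $x$ with $\omega(G_x) > \omega(G)/2$, together with their children. The point is that two heavy siblings can never be adjacent in their parent's label (their cliques would combine to exceed $\omega$), so this subtree is an \emph{independent} $\cal C$-tree-decomposition, a structure for which a direct depth-based colouring (Lemma~\ref{lem:depth} and Theorem~\ref{th:inddec}) gives $\chi \le \omega^{k+1}$ in one shot, regardless of how many prime layers are stacked inside it. Every leaf $v_i$ of this subtree then satisfies $\omega(G_{v_i}) \le \omega/2$, so the bucketing you describe is applied \emph{once}, globally, to the quotient graph $H$ (which lies in ${\cal C}_i$, not just in $\cal C$), with the inductive hypothesis invoked on the strictly smaller-depth pieces $G_{v_i}$; the factor-two drop in clique number is exactly what makes the geometric series sum to at most $1$ rather than to some $c>1$. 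In short: your per-node induction cannot work because the constant leaks; the fix is to collapse an entire independent chunk of the tree before bucketing, and that collapse is what the paper's notion of independent decomposition and its depth-based colouring achieve.
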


\begin{corollary}\label{cor:delay}
If $\cal C$ is polynomially $\chi$-bounded with function $\omega^k$, then ${\cal C}_d$ is $\chi$-bounded by $\omega^{6k+22}$.
\end{corollary}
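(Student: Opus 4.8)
The plan is to combine Lemma~\ref{lem:delayedpartition} with Theorem~\ref{th:cpst} and the elementary fact that the chromatic number is submultiplicative under edge unions. First I would fix an arbitrary $G\in{\cal C}_d$. By Lemma~\ref{lem:delayedpartition}, there are graphs $G_o,G_e\in{\cal C}_s$ on the same vertex set as $G$ with $E(G)=E(G_o)\cup E(G_e)$. Since $G_o$ and $G_e$ are spanning subgraphs of $G$, we have $\omega(G_o)\leq\omega(G)$ and $\omega(G_e)\leq\omega(G)$.

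Next I would apply Theorem~\ref{th:cpst} to ${\cal C}_s$: as ${\cal C}$ is polynomially $\chi$-bounded with function $\omega^k$, the class ${\cal C}_s$ is $\chi$-bounded by $\omega^{3k+11}$, so there are proper colorings $c_o$ of $G_o$ using at most $\omega(G_o)^{3k+11}\leq\omega(G)^{3k+11}$ colors and $c_e$ of $G_e$ using at most $\omega(G)^{3k+11}$ colors. Finally, color each vertex $v$ of $G$ by the pair $(c_o(v),c_e(v))$. Any edge $uv$ of $G$ lies in $G_o$ or in $G_e$, so already $c_o$ or $c_e$ separates its endpoints; hence this pair-coloring is proper for $G$, and it uses at most $\omega(G)^{3k+11}\cdot\omega(G)^{3k+11}=\omega(G)^{6k+22}$ colors, which is exactly the claimed bound.

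I do not expect a real obstacle here: the corollary is a direct consequence of the two preceding results, and the only points to verify are the trivial monotonicity $\omega(G_o),\omega(G_e)\leq\omega(G)$ and the submultiplicativity of $\chi$ under edge unions. The only place where one might hope to do better is the final product step — $G_o$ and $G_e$ arise from a single tree $T$ whose odd and even levels are ``interleaved'', so one could try to merge the two colorings more cleverly and shave the constant — but the straightforward product bound already gives $\omega^{6k+22}$, so I would not pursue that refinement.
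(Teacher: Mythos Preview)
Your proof is correct and is essentially identical to the paper's: it invokes Lemma~\ref{lem:delayedpartition} to write $G$ as the edge union of $G_o,G_e\in{\cal C}_s$, applies Theorem~\ref{th:cpst}, and uses $\chi(G)\leq\chi(G_o)\chi(G_e)$. The only difference is that you spell out the product-coloring argument and the monotonicity of $\omega$, which the paper leaves implicit.
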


\begin{proof} This directly follows from Lemma~\ref{lem:delayedpartition} and the fact that $\chi (G)\leq \chi (G_o)\chi (G_e)$.
\end{proof}

For the sake of self containment, we  provide in Section~\ref{sec:subs} a proof of Theorem~\ref{th:cpst} which is slightly simpler than the one of~\cite{CPST13} with the improved bound of $\omega^{2k+3}$ (hence giving $4k+6$ for delayed trees). It would be interesting to find the best bounds for Theorem~\ref{th:cpst}.

Thus, it suffices to focus on $\cal C$ if we want to show a polynomial $\chi$-bounding function for ${\cal C}_d$. The key to simplify a graph $G$ is an enumeration of its vertices for which the canonical delayed decomposition $(T_d,g)$ satisfies that all $g(x)$ are \emph{simpler} than $G$. This is better said for classes: A class ${\cal C}_0$ can be \emph{delayed} to ${\cal C}_1$ if ${\cal C}_0=({{\cal C}_1})_d$. For instance, cographs are the class $({{\cal G}_2})_d$ where ${\cal G}_2$ are the graphs of size at most 2 (indeed, it is $({{\cal G}_2})_s$). Let us say that a class ${\cal C}_0$ can be \emph{finitely delayed} to a class ${\cal C}_k$ if there exist ${\cal C}_0, {\cal C}_1,\dots ,{\cal C}_k$ such that ${\cal C}_i=({{\cal C}_{i+1}})_d$ for all $i=0,\dots ,k-1$. Also, we say that ${\cal C}_0$ has \emph{finite delay} if it can be finitely delayed to ${\cal G}_2$. By Corollary~\ref{cor:delay}, every class with finite delay is \pcb, as well as every class which can be finitely delayed to a \pcb~class. 

An exciting line of future research would be to explore which classes of graphs have finite delay to simpler classes. This could prove useful for understanding $\chi$-boundedness. Of course, it is a particular case of a more complex question asking if a graph can be edge-partitioned into two simpler graphs, but it has a great advantage: looking for arbitrary edge-partitions can be extremely complex due to the huge number of possibilities. Focusing instead on delayed decompositions only requires guessing the right vertex-ordering, which is a deeply explored field in graphs. To illustrate this, assuming that the vertex ordering of $G\in \cal C$ is a breadth first search and that the rule for cutting modules is to isolate the first vertex, note that $T_d$ will mimic the discovery order. In other words, if the induced subgraphs on the layers of the BFS belong to a  class simpler than $\cal C$, then $\cal C$ is a delayed extension of a simpler class. In this sense, delayed decompositions can generalize $\chi$-boundedness arguments based on layer decompositions.

Delayed decompositions also behave very well with twin-width, which has an alternative definition involving vertex orderings.

\section{Second operation: Right extension}

Our goal in this section is to define an extension of a class of graphs $\cal C$ which preserves $\chi$-boundedness, and even polynomial $\chi$-boundedness when the twin-width of $\mathcal{C}$ is bounded. Given a graph $G$, a \emph{right module partition} (RMP) is a partition $V_1, \ldots, V_k$ of the vertices of $G$ such that \begin{enumerate}
    \item Each $V_i$ is a stable set.
    \item For every $i < j$, $V_i$ is a module with respect to $V_j$ (i.e. $V_i$ is a module in $G[V_i\cup V_j]$).
\end{enumerate}

Note that every graph $G$ has a trivial RMP where each $V_i$ consists of a single vertex. Therefore, there should be some limitations to the definition of RMP. A first attempt is to consider a class of graphs $\mathcal{C}$, and insist that every induced subgraph intersecting every $V_j$ on at most one vertex (called a \emph{transversal}) belongs to $\mathcal{C}$. 
Unfortunately, even RMP with forests transversals are not $\chi$-bounded. To see this, consider $S_{n,2}$, the $n$-th shift graph, whose vertex set is $\{(i, j), 1 \leq i < j \leq n\}$ and such that there is an edge between $(i, j)$ and $(i', j')$ if and only if $j = i'$. Erd\H{o}s and Hajnal \cite{EH64} proved that the graphs $S_{n,2}$ are triangle-free and have unbounded chromatic number (this is a direct application of Ramsey theorem on pairs). However, the partition $(V_2, \ldots, V_n)$ where $V_j = \{(i, j), 1 \leq i < j\}$ for $2 \leq j \leq n$ is an RMP (with $V_1$ empty) such that the only neighbours of $(i,j)\in V_j$ in parts $V_k$ where $k<j$ are in $V_i$. Thus if $(i,j)$ belongs to a transversal, its degree is at most one with respect to the vertices in $V_k$ with $k < j$. Hence, all transversals are forests, while the graphs $S_{n,2}$ are not $\chi$-bounded.

For this reason, we introduce a stronger notion of RMP, meant to preserve $\chi$-boundedness.
    If $\mathcal{P} = (V_1, \ldots, V_k)$ is an RMP of a graph $G$, for every $1 \leq j_1 < j_2 < \ldots < j_\ell \leq k$ and every $W_{j_1} \subseteq V_{j_1}, \ldots, W_{j_\ell} \subseteq V_{j_\ell}$, all non-empty, we denote by $G/\{W_{j_1}, \ldots ,W_{j_\ell}\}$ the graph on vertex set $[\ell]$ such that there is an edge $ii'$ if and only if there is an edge between $W_{j_i}$ and $W_{j_{i'}}$. We call such a graph a \emph{transversal minor} of $(G, \mathcal{P})$.
    Given a class $\mathcal{C}$, an RMP such that all transversal minors are in $\mathcal{C}$ is called a \emph{$\mathcal{C}$-RMP}. The class of graphs $G$ admitting a $\mathcal{C}$-RMP is denoted by $RM(\mathcal{C})$ and is called the \emph{right extension} of $\cal C$.

\begin{figure}[h]
\centering
\begin{tikzpicture}
\foreach \x/\y/\z in {1/2/A, 1/3/B, 2/3/C, 1/4/D, 2/4/E, 3/4/F, 1/5/G, 2/5/H, 3/5/I, 4/5/J} {
  \node at (2.1*\y-2.1, 1.5*\x+1.5-1.5*\y) (\z) {\x, \y};
}
\node[draw,dotted,fit=(A)] (2) {};
\node[draw,dotted,fit=(B) (C)] (3) {};
\node[draw,dotted,fit=(D) (E) (F)] (4) {};
\node[draw,dotted,fit=(G) (H) (I) (J)] (5) {};
\draw (C) to [bend left = 70] (2.south);
\draw (E.west) to [bend left = 70] (2.south);
\draw (F) -- (3.east);
\draw (H.west) to [bend left = 70] (2.south);
\draw (I.west) to [bend left = 70] (3.east);
\draw (J) -- (4);
\end{tikzpicture}
\caption{The RMP for $S_{5, 2}$. Here, an edge means that we have all edges from the stable set on the left to the vertex on the right. Observe that every transversal is a forest, however we can form every graph on 4 vertices as a transversal minor.}
\end{figure}
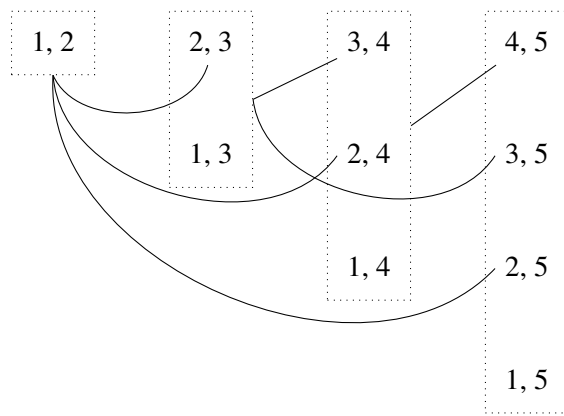

Our goal is now to show that right extension preserves $\chi$-boundedness. We will express our result in the language of ordered graphs (graphs with a total order on vertices). An RMP for an ordered graph must respect the order, that is the parts of the partition must consist of consecutive vertices. The following result is not used in the proof that bounded twin-width graphs are \pcb, but it could be useful in other context and its proof is similar to a later argument.  

\begin{proposition}\label{prop:chibound} If $\mathcal{C}$ is a  $\chi$-bounded class of ordered graphs, then $RM(\mathcal{C})$ is $\chi$-bounded.
\end{proposition}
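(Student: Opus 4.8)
The plan is to induct on the number $k$ of parts in a $\mathcal{C}$-RMP $\mathcal{P} = (V_1, \ldots, V_k)$ of $G$, or rather to peel off the last part $V_k$. Since $V_k$ is a stable set, it contributes only one color beyond whatever is needed for $G' := G[V_1 \cup \cdots \cup V_{k-1}]$ — but naively this gives a bound linear in $k$, which is useless since $k$ is unbounded. So instead I would group the parts into $\chi$-like classes: the key point is that $G'$ with the induced RMP $(V_1, \ldots, V_{k-1})$ is again a $\mathcal{C}$-RMP, and more importantly, for a fixed vertex $v \in V_j$ and a part $V_i$ with $i < j$, the neighborhood of $v$ in $V_i$ is all of $V_i$ or none of it (property 2 of an RMP). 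So the "trace" of a vertex $v \in V_j$ on the earlier parts is a $0/1$ vector indexed by $\{1, \ldots, j-1\}$, and two vertices of $V_j$ with the same trace are twins. The right way to organize the induction is via the transversal-minor condition: it ensures that when we pick one vertex $W_{j_i}$ (or a subset) from each of an increasing sequence of parts, the resulting "pattern" graph lies in $\mathcal{C}$, hence is $\chi$-bounded.

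Concretely, I would proceed as follows. Fix $\omega = \omega(G)$ and let $f$ be the $\chi$-bounding function of $\mathcal{C}$. First handle the transversal minors: for any choice of subsets $W_{j_1} \subseteq V_{j_1}, \ldots, W_{j_\ell} \subseteq V_{j_\ell}$, the graph $G/\{W_{j_1}, \ldots, W_{j_\ell}\}$ is in $\mathcal{C}$ and has clique number at most $\omega$, hence chromatic number at most $f(\omega)$. Now build a coloring of $G$ greedily along the order $V_1, V_2, \ldots, V_k$: process the parts left to right, and within each part $V_j$, note that since $V_j$ is a stable set, all vertices of $V_j$ can in principle receive one color — the constraint is with later parts. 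The cleanest route is a potential/weight argument on an auxiliary structure: define, for each part $V_j$, an auxiliary graph $H_j$ whose vertices are the distinct neighborhood-traces appearing among vertices of $\bigcup_{i > j} V_i$ restricted to $V_j$ — but this may blow up. A more robust approach: color $G$ by finding a proper coloring of a transversal-minor-like contraction. Specifically, iteratively contract each $V_i$ (from $i = k$ down to $1$) by merging vertices with identical forward-traces, control the clique number of each contracted graph using the transversal minor property (a clique in the contraction lifts to a clique in $G$ since equal-trace vertices in $V_i$ are genuine twins toward later parts, and the part itself is stable), color the contraction with $f(\omega)$ colors, and pull back — the pullback is proper because within a contracted class the original vertices are non-adjacent (stable set) and across classes adjacency is preserved. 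Summing, or rather taking a product over the $O(\log)$-depth of a divide-and-conquer on the parts, yields a polynomial bound in $\omega$; but since we only claim $\chi$-boundedness here (not polynomial), even a crude recursion $\chi(G) \le f(\omega) \cdot (\text{something bounded in } \omega)$ suffices.

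The main obstacle I expect is controlling the interaction between a part $V_j$ and the *collection* of all later parts simultaneously: a single vertex $v \in V_j$ sees each later part $V_i$ ($i > j$) in a modular way, but different vertices of $V_j$ may see different later parts differently, so $V_j$ is not a single module toward the right. This is exactly where the transversal-minor hypothesis (as opposed to mere transversals, which the shift-graph example shows is insufficient) must be used: it promises that every "finite pattern" of such interactions, read off by picking witnesses from increasing parts, already lies in the $\chi$-bounded class $\mathcal{C}$. Making this precise — choosing the right contraction order and verifying that contracted cliques lift to genuine cliques of bounded size — is the technical heart, and I would expect the argument to mirror (in simpler form) the later, more elaborate argument used for bounded twin-width graphs that the text alludes to.
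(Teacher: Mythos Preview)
Your proposal contains a genuine gap at its central step. You claim that every transversal minor of $(G,\mathcal{P})$ has clique number at most $\omega(G)$, arguing that ``a clique in the contraction lifts to a clique in $G$''. This is false. Take $V_1=\{a_1,a_2\}$, $V_2=\{b_1,b_2\}$, $V_3=\{c_1,c_2\}$, with $b_1$ complete to $V_1$ and $b_2$ anticomplete to $V_1$; $c_1$ complete to $V_1$ and anticomplete to $V_2$; $c_2$ complete to $V_2$ and anticomplete to $V_1$. All RMP axioms hold, $G$ is triangle-free, yet $G/\mathcal{P}$ is a triangle. The point is that the edge $V_1\text{--}V_3$ is witnessed by $c_1$ while the edge $V_2\text{--}V_3$ is witnessed by $c_2$; there is no single vertex of $V_3$ dominating both $V_1$ and $V_2$, so the triangle in the quotient does not lift. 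Your ``merging by forward-traces'' does not help: since each $V_i$ is a module with respect to every later $V_j$, all vertices of $V_i$ have identical forward neighbourhoods, so the merge collapses each $V_i$ to a single vertex and you are back to $G/\mathcal{P}$.

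What the paper actually does is precisely to bound $\omega(G/\mathcal{P})$, but not by $\omega(G)$: it first observes that a $\chi$-bounded class $\mathcal{C}$ must omit some ordered graph $H$ on $h$ vertices, and then proves by a double induction on $\omega(G)$ and $h$ that $\omega(G/\mathcal{P})\le \phi(\omega(G),h)$. The inductive step looks at a minimal subset $D_k\subseteq V_k$ that has an edge to every earlier part; if $D_k$ is large one can realise $H$ as a transversal minor (contradiction), and if $D_k$ is small one covers the earlier parts by the neighbourhoods of the vertices of $D_k$, each of which has clique number at most $\omega(G)-1$. Once $\omega(G/\mathcal{P})$ is bounded, the conclusion $\chi(G)\le \chi(G/\mathcal{P})\le f(\omega(G/\mathcal{P}))$ is immediate. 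The missing idea in your sketch is this dominating-set argument in $V_k$ together with the use of $h$-freeness; without it there is no control on $\omega(G/\mathcal{P})$.
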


The proof of Proposition~\ref{prop:chibound} is heavily based on the following lemma about the clique number. Given an ordered graph $G$ and an RMP ${\mathcal P}=(V_1, \ldots, V_k)$, we denote by $G/\cal P$ the ordered graph obtained by contracting all parts of $\cal P$, i.e. $G / \{V_1, \ldots, V_k\}$. This is a particular transversal minor of $G$, with the property that $\chi (G) \leq \chi (G/{\cal P})$. 
A class $\cal C$ of ordered graphs is \emph{$h$-free} if it does not contain some ordered graph on $h$ vertices.

\begin{lemma} There exists a function $\phi$ such that for every $h$-free class $\cal C$ and every ordered graph $G$ with a $\cal C$-right module partition $\cal P$, we have $\omega (G/{\cal P})\leq \phi (\omega (G),h)$
\end{lemma}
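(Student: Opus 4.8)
The plan is to bound $\omega(G/\mathcal P)$ by induction on $\omega(G)$, with $h$ fixed; this produces a function $\phi$ with $\phi(\omega(G),h)=:N(\omega(G))$, where $N$ is defined recursively from $h$ and iterated Ramsey numbers. If $\omega(G)\le 1$ then $G$, hence $G/\mathcal P$, is edgeless, so $\omega(G/\mathcal P)\le 1$; this is the base case. For the inductive step, assume $\omega(G)\le t$ and, towards a contradiction, that $\omega(G/\mathcal P)\ge N(t)$, so there are parts $Q_1<\dots<Q_N$ (with $N=N(t)$) that are pairwise adjacent in $G/\mathcal P$. For $a<b$, since $Q_a$ is a module with respect to $Q_b$, every $v\in Q_b$ is complete or anticomplete to $Q_a$; write $\pi(v)$ for the set of $a<b$ such that $v$ is complete to $Q_a$ (the \emph{profile} of $v\in Q_b$). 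Pairwise adjacency means that for each $b$ the profiles of the vertices of $Q_b$ cover $\{1,\dots,b-1\}$.

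The core step is a ``peeling'' of the last part. Suppose we have a sub-clique indexed by parts $Q_a$, $a\in P$, together with one extra part $Q_b$ with $b>\max P$. For each $a\in P$ choose $v_a\in Q_b$ complete to $Q_a$, and colour each pair $a<a'$ of $P$ by the two bits $\mathbf 1[a\in\pi(v_{a'})]$ and $\mathbf 1[a'\in\pi(v_a)]$; by Ramsey there is a large monochromatic $A'\subseteq P$. If the common colour has a $1$ in the first (resp.\ second) coordinate, then $v_{\max A'}$ (resp.\ $v_{\min A'}$) is complete to $Q_{a''}$ for every other $a''\in A'$; so this vertex is complete to a sub-clique on $|A'|-1$ parts, whose union $G^{*}$ (with the induced RMP) satisfies $\omega(G^{*})\le t-1$ while its contraction is a clique on $|A'|-1$ vertices. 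Since the transversal minors of $G^{*}$ are among those of $(G,\mathcal P)$, hence in the $h$-free class $\mathcal C$, the induction hypothesis gives $|A'|-1<N(t-1)$, a contradiction once $N(t)$ is chosen large enough that $|A'|$ is. In the remaining colour class, $\pi(v_a)\cap A'=\{a\}$ for every $a\in A'$ (using $a\in\pi(v_a)$); this is a ``clean row'' for $Q_b$, and moreover $v_{\min A'}$ has $\pi(v_{\min A'})$ disjoint from $A'\setminus\{\min A'\}$, an ``empty witness''.

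We then iterate: start with the whole clique $Q_1<\dots<Q_N$, repeatedly peel the current last part (obtaining a clean row, since the other colour classes would contradict the induction hypothesis), set one surviving part aside as the empty witness for that row, and recurse on the rest. Choosing $N(t)$ large enough that all $h-1$ Ramsey applications go through, we end with parts $Q_{c_1}<\dots<Q_{c_h}$, vertices $v_{i,j}\in Q_{c_j}$ for $i<j$ with $\pi(v_{i,j})\cap\{c_1,\dots,c_{j-1}\}=\{c_i\}$, and vertices $v_{\varnothing,j}\in Q_{c_j}$ with $\pi(v_{\varnothing,j})\cap\{c_1,\dots,c_{j-1}\}=\varnothing$. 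Given any ordered graph $H$ on $\{1<\dots<h\}$, put $W_{c_j}=\{v_{i,j}:i<j,\ ij\in E(H)\}$, or $W_{c_j}=\{v_{\varnothing,j}\}$ when that set is empty. Using that each $Q_{c_i}$ is a module with respect to $Q_{c_j}$ for $i<j$, the left-neighbourhood of vertex $j$ in the transversal minor $G/\{W_{c_1},\dots,W_{c_h}\}$ is $\{i<j : c_i\in\bigcup_{v\in W_{c_j}}\pi(v)\}=\{i<j: ij\in E(H)\}$, so this transversal minor is ordered-isomorphic to $H$. Hence every ordered graph on $h$ vertices is a transversal minor of $(G,\mathcal P)$ and therefore lies in $\mathcal C$, contradicting that $\mathcal C$ is $h$-free; thus $\omega(G/\mathcal P)<N(t)$, completing the induction.

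I expect the main difficulty to lie in designing the peeling step so that it is genuinely exhaustive: the two-bit edge colouring must be arranged so that every ``bad'' monochromatic class feeds the $\omega$-induction through a vertex complete to a large sub-clique, the unique ``good'' class must yield profiles that are singletons on all surviving parts (not merely on a prefix), and one must carry the empty witnesses along so that ordered graphs with left-isolated vertices --- in particular the arbitrary forbidden graph --- can also be realized as transversal minors. A secondary, routine point is to check that transversal minors of the sub-RMPs used along the way are transversal minors of $(G,\mathcal P)$, so that the $h$-free class is inherited at every stage; tracking the quantitative loss then shows $\phi$ may be taken as an $h$-fold iterate of a Ramsey function applied to $\omega(G)$.
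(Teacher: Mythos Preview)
Your proof is correct and genuinely different from the paper's. Both arguments exploit the same basic leverage---a vertex of the last part that is complete to many earlier parts forces those parts to induce a graph of smaller clique number, feeding an induction on $\omega$---but they organise it differently. The paper runs a \emph{double} induction on $(\omega,h)$: it takes a minimal subset $D_k\subseteq V_k$ that hits every earlier part, and splits on $|D_k|$. If $|D_k|$ is large, the private parts $\{V_{i_d}\}$ give, by the $h$-induction, every ordered $(h-1)$-vertex graph as a transversal minor, which is then extended to $H$ by selecting suitable elements of $D_k$; if $|D_k|$ is small, the neighbourhoods of the $d\in D_k$ cover the clique and each has clique number $\le\omega-1$, feeding the $\omega$-induction. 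You instead fix $h$ and induct only on $\omega$, replacing the dominating-set dichotomy by a $4$-colour Ramsey step on profile pairs; three colours collapse via the $\omega$-induction, and the surviving $(0,0)$ colour yields singleton profiles. Iterating $h-1$ times builds a gadget that realises \emph{every} ordered $h$-vertex graph at once. The paper's route gives a clean recursion $\phi(\omega,h)=\phi(\omega-1,h)(\phi(\omega,h-1)+1)+1$, while your bound is an $(h-1)$-fold iterate of a Ramsey function composed with the $\omega$-recursion---much larger, though this does not matter for Proposition~\ref{prop:chibound}. Your bookkeeping of the empty witnesses (to realise left-isolated vertices of $H$) and the check that sub-RMPs inherit the $h$-free class are both handled correctly.
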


\begin{proof}
    The proof is by induction on $\omega :=\omega(G)$ and $h$. 
    If $h=1$ or $\omega =1$ then $G$ is edgeless, so we can set $\phi(x,1)=\phi(1,y)=1$. 
    Now, $\omega \geq 2$ and $h \geq 2$, and we assume that we proved the existence of $\phi(\omega - 1, h)$ and $\phi(\omega, h-1)$.
    We denote $\mathcal{P} = (V_1, \ldots, V_k)$ and consider an ordered graph $H$ on vertices $v_1,\dots ,v_h$ which is not in $\cal C$. Observe that we can restrict ourselves to the case where $G/{\cal P}$ is a clique, as we can only consider a maximal clique of $G/{\cal P}$.
    
    Thus, for every $i < j \in [k]$, there is an edge between $V_i$ and $V_j$.
    Let $D_k$ be a subset of $V_k$ of minimal size such that there is an edge between $V_i$ and $D_k$ for every $i < k$. By minimality of $D_k$, for every $d \in D_k$ there exists $i_d \in [k-1]$ such that there is an edge between $V_{i_d}$ and $d$ (and thus $d$ dominates $V_{i_d}$) but no edge between $V_{i_d}$ and $D_k \setminus \{d\}$. We consider two cases depending on the size of $D_k$. 
    
    \begin{enumerate}
        \item If $|D_k| > \phi(\omega, h-1)+1$, we show that we reach a contradiction. Select some $x\in D_k$ and consider the set $\cal P '$ of $|D_k|-1$ parts $V_{i_d}$ as defined above, except for the part $V_{i_x}$ which is not selected in $\cal P '$. Since $G/\cal P'$ is a clique of size at least $\phi(\omega, h-1)+1$, it follows by induction that $(G[\cup {\cal P'}],\cal P')$ contains all ordered graphs of size $h-1$ as transversal minors. In particular, the ordered graph $H'=H\setminus v_h$ is a transversal minor of $\cal P '$.
        If $v_h$ is isolated in $H$, we reach a contradiction since $H'\cup x$ is isomorphic to $H$ and is a transversal minor of $(G,\cal P)$. Otherwise, observe that one can extend $H'$ in all possible ways as a transversal minor of $\cal P$ by selecting some vertices in $D_k$. Indeed, every $V_{i_d}$ corresponds to a vertex $d$ in $D_k$ which is only joined to $V_{i_d}$. We can then select vertices in $D_k$ to extend $H'$ to $H$, a contradiction.
        
        \item If $|D_k| \leq \phi(\omega, h-1)+1$. For $d \in D_k$, let $S_d$ be the set of neighbours of $d$ in $G$. In particular, $\omega(G[S_d]) \leq \omega - 1$. Furthermore, $\mathcal{P}$ induces by restriction a $\mathcal{C}$-RMP $\mathcal{P}_d$ of $G[S_d]$. We then have $\omega(G[S_d]/\mathcal{P}_d) \leq \phi(\omega - 1, h)$. By taking the union over all $d \in D_k$, we deduce $\omega(G/\mathcal{P}) \leq \phi(\omega - 1, h) \cdot (\phi(\omega, h-1) +1)+ 1$ (the additional +1 stands for the last class $V_k$ which is not dominated).

    \end{enumerate}

Therefore, we can choose $\phi(\omega, h) = \phi(\omega - 1, h) \cdot (\phi(\omega, h-1) +1)+ 1$.
\end{proof}

We are now ready to prove Proposition~\ref{prop:chibound}. If $\cal C$ has a $\chi$-bounding function $f$, by the fact that the class of all graphs is not $\chi$-bounded, there is a graph $H$ of size $h$ which is not in $\cal C$. Consider now any graph $G$ in $RM(\cal C)$ with clique number $\omega$ and $\mathcal{C}$-RMP $\cal P$. We have $$\chi (G)\leq \chi (G/{\cal P})\leq f(\omega (G/{\cal P}))\leq f(\phi(\omega(G),h)).$$

Therefore the function $f(\phi(\omega(G),h))$ is $\chi$-bounding for $RM(\cal C)$. This approach does not provide a polynomial bound if the class $\cal C$ is polynomially $\chi$-bounded. We could not prove (or disprove) that polynomial $\chi$-boundedness is preserved by RMP. However, this is the case when the twin-width of $\mathcal{C}$ is bounded, as shown in the following section. 
\section{Twin-width and almost-mixed minors}
  
  We recall here some definitions and results related to twin-width. For a more intuitive and pedestrian introduction, see~\cite{BKTW21}. We adopt here the matrix point of view of twin-width, where every graph $G$ is represented via its symmetric adjacency matrix $(a_{u,v})$ where $u,v$ are over couples of vertices. The entry $a_{u,v}$ is 1 if $uv$ is an edge, 0 if $uv$ is not an edge, and $*$ if $u=v$. The addition of the $*$ symbol slightly simplifies some technicalities, but is not necessary for the argument.
  
    A $01*$-matrix is \emph{horizontal} if all its rows are constant. It is \emph{vertical} if all its columns are constant. It is \emph{constant} if it is both horizontal and vertical. It is \emph{mixed} if it is neither horizontal nor vertical, or if it has at least 2 rows and 2 columns and contains a $*$ entry.
A \emph{corner} in a matrix $M$ is a mixed $2 \times 2$ submatrix of $M$.

\begin{lemma}[\cite{BKTW21}]\label{lem:corner}
 A matrix is mixed if and only if it contains a corner.
\end{lemma}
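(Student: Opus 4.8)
The plan is to prove both directions. One direction is immediate: if $M$ contains a corner, i.e. a mixed $2\times 2$ submatrix $N$, then $N$ is not horizontal (it has two distinct rows) and not vertical (it has two distinct columns), or it has a $*$ with $2$ rows and $2$ columns; in either case $N$ witnesses that $M$ itself cannot be horizontal (a submatrix of a horizontal matrix is horizontal) nor vertical, and if $N$ has a $*$ then so does $M$ while $M$ has at least $2$ rows and $2$ columns. Hence $M$ is mixed. The real content is the converse: a mixed matrix contains a corner.

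For the converse I would split according to the definition of ``mixed''. \textbf{Case 1: $M$ has a $*$ entry and at least two rows and two columns.} Say the $*$ sits in row $r$, column $c$. Pick any other row $r'$ and any other column $c'$; the $2\times 2$ submatrix on rows $\{r,r'\}$ and columns $\{c,c'\}$ contains a $*$ and has $2$ rows and $2$ columns, hence is mixed, so it is a corner. \textbf{Case 2: $M$ is neither horizontal nor vertical (and, to avoid overlap, we may assume it has no $*$, or simply argue directly).} Since $M$ is not horizontal, some row $r$ is nonconstant, so there are columns $c_1,c_2$ with $M[r,c_1]\neq M[r,c_2]$. Since $M$ is not vertical, some column $c$ is nonconstant, so there are rows $r_1,r_2$ with $M[r_1,c]\neq M[r_2,c]$. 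The obstacle is that these four indices need not line up into a single $2\times 2$ block that is mixed; I need a short combinatorial argument to extract one.

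The key step is a local exchange/shifting argument. Consider the $2\times n$ submatrix $M'$ on rows $\{r_1,r_2\}$. If $M'$ has two distinct columns then it is a corner-containing $2\times n$ matrix, and a $2\times n$ matrix with two distinct columns and two distinct rows trivially contains a mixed $2\times 2$ submatrix (pick the two differing columns); $M'$ does have two distinct rows because $M[r_1,c]\neq M[r_2,c]$. So it remains to handle the case where $M'$ has all columns equal, i.e. rows $r_1$ and $r_2$ are identical as vectors except we know $M[r_1,c]\neq M[r_2,c]$ — contradiction, since equal columns force $M[r_1,c]=M[r_2,c]$. Hence $M'$ always has two distinct columns and we are done. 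Symmetrically, if one prefers to start from the nonconstant row $r$: the $m\times 2$ submatrix on columns $\{c_1,c_2\}$ has two distinct rows (namely row $r$), and if it also has two distinct columns we extract a mixed $2\times 2$; if all its rows were pairwise equal then in particular row $r$ would be constant on $\{c_1,c_2\}$, contradicting $M[r,c_1]\neq M[r,c_2]$. Either way a corner appears.

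I expect the main (minor) obstacle to be bookkeeping around the $*$ symbol: one must check that ``mixed'' for the small $2\times 2$ submatrix is read off correctly in the presence of a $*$, and that the case split in the definition of ``mixed'' is handled without gaps (in particular that a $2\times 2$ matrix containing a $*$ automatically counts as mixed, so Case 1 really does produce a corner). Once that is pinned down, the argument is the two-line exchange above; no induction on the size of $M$ is needed.
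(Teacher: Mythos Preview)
The paper does not actually prove this lemma; it is cited from \cite{BKTW21} and used as a black box, so there is no proof in the paper to compare your plan against.

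Your overall strategy is right, but Case~2 of the converse has a real gap. You assert that if the $2\times n$ submatrix $M'$ on rows $\{r_1,r_2\}$ has all columns equal then ``rows $r_1$ and $r_2$ are identical as vectors'', contradicting $M[r_1,c]\neq M[r_2,c]$. That deduction is false: ``all columns of $M'$ equal'' means every column is the same pair $(a,b)$, which forces row $r_1$ to be constantly $a$ and row $r_2$ constantly $b$, but says nothing about $a=b$; indeed $a\neq b$ is exactly what $M[r_1,c]\neq M[r_2,c]$ gives you. In that situation $M'$ is horizontal and genuinely contains no corner, so you cannot finish inside $M'$. Your ``symmetric'' variant on the $m\times 2$ submatrix has the mirror error, and more generally you are conflating ``not horizontal'' with ``has two distinct row vectors'' throughout; recall that horizontal means every row is \emph{constant}, so for instance the $2\times 2$ matrix with rows $(0,0)$ and $(1,1)$ has two distinct rows yet is horizontal.

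The fix is short once you see the issue: in the bad case both $r_1$ and $r_2$ are constant rows of $M$, so the non-constant row $r$ guaranteed by ``$M$ not horizontal'' is a third row. Look at the $2\times 2$ submatrix on rows $\{r,r_1\}$ and columns $\{c_1,c_2\}$: row $r$ is non-constant there (so not horizontal), and since row $r_1$ is constant while $M[r,c_1]\neq M[r,c_2]$, at least one of the two columns is non-constant (so not vertical). That is your corner.
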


   Let $M$ be a matrix. 
    A \emph{row partition} of $M$ is a partition of the rows of $M$ in which each part of the partition consists of consecutive rows. We define \emph{column partitions} in a similar way. 
    A \emph{division} $\mathcal{D}$ of $M$ is a pair $(\mathcal{R}, \mathcal{C})$ where $\mathcal{R}$ is a row partition of $M$ and $\mathcal{C}$ is a column partition of $M$. If $\mathcal{R}$ and $\mathcal{C}$ both have the same number of parts, say $k$, we say that $(\mathcal{R}, \mathcal{C})$ is a \emph{$k$-division} of $M$. In this case, we index the row blocks and the column blocks of $\mathcal{D}$ with integers from $[k]$ in the natural order of the blocks. 
    If $\mathcal{D}$ is a $k$-division of $M$, for $i, j \in [k]$, we denote by $\mathcal{D}[i, j]$ the intersection of the $i$-th row block with the $j$-th column block, which we call a \emph{zone} of $\mathcal{D}$.  If $R_i\in \mathcal{R}$ and $C_j\in\mathcal{C}$, we also adopt the notation $[R_i,C_j]$ for the zone $\mathcal{D}[i, j]$. It is a contiguous submatrix of $M$. 
    
    We say that a zone of a matrix is \emph{mixed} if it is mixed as a submatrix.
    If $M$ is symmetric, we say that a division $(\mathcal{R}, \mathcal{C})$ is \emph{symmetric} if $\mathcal{R}$ and $\mathcal{C}$ partition rows and columns in the same way (i.e. $\mathcal{R}$ is the transpose of $\mathcal{C}$).
    
    Let $M$ be a matrix and $\mathcal{D}$ be a $d$-division of $M$. We say that $\mathcal{D}$ is a \emph{$d$-mixed minor} if each zone of $\mathcal{D}$ is mixed. If $M$ does not have any $d$-mixed minor, we say that $M$ is \emph{$d$-mixed free}. The twin-width parameter and mixed-minor freeness are functionally equivalent. In particular, the following was shown in \cite{BKTW21}:

    \begin{lemma}\label{lem:mixedtww}
    If a graph $G$ has twin-width at most $d$, it has a vertex ordering for which the adjacency matrix of $G$ is $f_d$-mixed free for some constant $f_d$.
    \end{lemma}

    The next result is a direct consequence of the Marcus-Tardos theorem, see \cite{MT04}:

\begin{theorem}\label{th:marcustardos}
    For every positive integer $d$, there is a constant $mt_d$ such that for every $d$-mixed free matrix $M$ and every $k$-division of $M$, the number of mixed zones is at most $mt_d \cdot k$
\end{theorem}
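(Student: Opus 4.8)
The plan is to reduce the statement to the ``grid minor'' form of the Marcus--Tardos theorem via the pattern matrix of mixed zones. Fix $d$, and let me first record the form of Marcus--Tardos I will use (a routine consequence of \cite{MT04}, also used in \cite{BKTW21}): there is a constant $c_d$ such that every $k\times k$ $0/1$-matrix with more than $c_d\cdot k$ ones admits a \emph{$d$-grid minor}, i.e.\ there are pairwise disjoint intervals of row indices $I_1<\dots<I_d$ and pairwise disjoint intervals of column indices $J_1<\dots<J_d$ such that for all $p,q\in[d]$ the submatrix indexed by $I_p\times J_q$ contains a $1$-entry. I then set $mt_d:=\max(c_d,d)$.

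Suppose for contradiction that $M$ is $d$-mixed free but admits a $k$-division $\mathcal D=(\mathcal R,\mathcal C)$ with $\mathcal R=(R_1,\dots,R_k)$, $\mathcal C=(C_1,\dots,C_k)$, and more than $mt_d\cdot k$ mixed zones. Form the $k\times k$ pattern matrix $P$ with $P[i,j]=1$ exactly when the zone $[R_i,C_j]$ is mixed. The number of ones of $P$ exceeds $mt_d\cdot k\geq c_d\cdot k$; moreover $k\geq d$, since otherwise $P$ would have at most $k^2<dk\leq mt_d\cdot k$ ones. Hence $P$ has a $d$-grid minor: intervals $I_1<\dots<I_d$ of row-block indices and $J_1<\dots<J_d$ of column-block indices such that each $I_p\times J_q$ contains a pair $(i,j)$ with $[R_i,C_j]$ mixed.

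Now I turn this grid minor into a genuine $d$-division of $M$. Extend $I_1,\dots,I_d$ to a partition of $[k]$ into $d$ consecutive intervals $I'_1<\dots<I'_d$ with $I_p\subseteq I'_p$ (absorb the indices lying in no $I_p$ into a neighbouring interval), and set $R'_p:=\bigcup_{i\in I'_p}R_i$; define $C'_1,\dots,C'_d$ from $J_1,\dots,J_d$ analogously. Then $\mathcal D'=(\mathcal R',\mathcal C')$ with $\mathcal R'=(R'_1,\dots,R'_d)$, $\mathcal C'=(C'_1,\dots,C'_d)$ is a $d$-division of $M$. For each $p,q\in[d]$ the zone $[R'_p,C'_q]$ contains, as a contiguous submatrix, a mixed zone $[R_i,C_j]$ with $i\in I_p$ and $j\in J_q$; by Lemma~\ref{lem:corner} this submatrix contains a corner, hence so does $[R'_p,C'_q]$, so $[R'_p,C'_q]$ is mixed. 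Thus $\mathcal D'$ is a $d$-mixed minor of $M$, contradicting $d$-mixed freeness. Therefore $M$ has at most $mt_d\cdot k$ mixed zones.

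There is no real obstacle here: the statement genuinely is a direct consequence of Marcus--Tardos. The only points needing care are invoking Marcus--Tardos in precisely the grid-minor form above (rather than the original permutation-avoidance form, from which it follows by a standard reduction), and checking that a grid minor of the pattern matrix always yields an honest division of $M$ — which works because enlarging the row/column groups only enlarges each zone, and enlarging a mixed zone keeps it mixed by Lemma~\ref{lem:corner}.
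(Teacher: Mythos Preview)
Your proof is correct. The paper does not actually give a proof of this statement; it merely records it as ``a direct consequence of the Marcus--Tardos theorem'' with a citation to~\cite{MT04}, so your argument is precisely the standard reduction the paper is alluding to (form the $0/1$ pattern matrix of mixed zones, apply Marcus--Tardos in its grid-minor form as in~\cite{BKTW21}, and lift the grid minor back to a $d$-mixed minor of $M$ using Lemma~\ref{lem:corner}).
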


As we will perform some operations on our graphs (such as deleting edges and contracting subsets of vertices), we show in the next results how mixed zones are affected. Let $M$ be a $01*$-matrix (not necessarily an adjacency matrix) with exactly two row blocks $R,R'$ and two columns blocks $C,C'$, each of size at least two. 

\begin{lemma}\label{lem:4corner}
If all four zones of $M$ are mixed, there is a corner intersecting all zones.
\end{lemma}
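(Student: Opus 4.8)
The statement concerns a $01*$-matrix $M$ with two row blocks $R,R'$ and two column blocks $C,C'$, each of size $\geq 2$, such that all four zones $[R,C]$, $[R,C']$, $[R',C]$, $[R',C']$ are mixed; we want a single $2\times 2$ corner of $M$ that meets all four zones, i.e. whose two rows straddle the $R/R'$ boundary and whose two columns straddle the $C/C'$ boundary. My approach is to show that the two ``boundary'' lines (the last row $r_0$ of $R$ together with the first row $r_1$ of $R'$, and similarly the last column $c_0$ of $C$ and the first column $c_1$ of $C'$) cannot all be ``aligned'' with the structure forced by mixedness of the four zones, and to extract the corner by a short case analysis on the $2\times 2$ submatrix sitting exactly at the crossing of these boundary lines.

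**Key steps.**
First I would set up the candidate: let $N$ be the $2\times 2$ submatrix of $M$ on rows $\{r_0,r_1\}$ and columns $\{c_0,c_1\}$, where $r_0=\max R$, $r_1=\min R'$, $c_0=\max C$, $c_1=\min C'$. If $N$ is already mixed we are done by Lemma~\ref{lem:corner} (it is literally a corner, and by construction it meets all four zones). So assume $N$ is not mixed, hence $N$ is horizontal or vertical (or constant). Next, I would use mixedness of the individual zones together with Lemma~\ref{lem:corner}: each zone contains a corner, but a corner inside a single zone does not yet straddle a block boundary. The trick is to ``slide'' such a corner toward the boundary: given a corner in zone $[R,C]$, I would argue that one can replace one of its rows by $r_0$ and/or one of its columns by $c_0$ while staying mixed, using the fact that within the $01*$-world a $2\times m$ (or $m\times 2$) mixed submatrix still contains a corner, and a row/column swap that destroys mixedness forces a constancy relation I can feed into the other zones. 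Concretely, I expect to prove: the row $r_0$ restricted to columns of $C$ is ``mixed-compatible'' with the $R'$-side, because otherwise $[R',C]$ being mixed is contradicted; iterating this over all four zones pins down enough entries around the crossing point that $N$ itself must be mixed — contradiction. Finally, assemble: once $N$ is shown mixed, it is a corner meeting all four zones.

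**Alternative, cleaner route.** Rather than sliding corners, I would more likely argue contrapositively on the structure of the boundary cross. Suppose no corner meets all four zones. I claim then that either the two middle rows $r_0,r_1$ agree on all of $C\cup C'$ except possibly in a way that can be ``absorbed'', or the two middle columns $c_0,c_1$ agree similarly; more precisely, the absence of an all-zone corner means that for every choice of a row in $R$, a row in $R'$, a column in $C$, a column in $C'$, the resulting $2\times2$ submatrix is non-mixed, i.e. horizontal or vertical. I would then show this ``every crossing $2\times2$ is non-mixed'' condition propagates: fixing a non-mixed pattern at one crossing and varying one line at a time, the non-mixedness of all neighbouring crossings forces one of the four zones to be horizontal or vertical (pick the zone, say $[R,C]$, and show all its rows become constant, or all its columns do), contradicting that all four zones are mixed. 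The main obstacle is exactly this propagation bookkeeping — handling the $*$ entries (which only occur on the diagonal if $M$ were an adjacency matrix, but here $M$ is a general $01*$-matrix, so $*$'s could a priori sit anywhere) and making sure the ``horizontal vs. vertical'' dichotomy is tracked consistently across the four zones so that a genuine contradiction, rather than a mere case split, emerges. I would structure that step as: assume the crossing $2\times2$ at $(r_0,r_1;c_0,c_1)$ is, say, horizontal; then use mixedness of $[R,C]$ to find a column of $C$ on which $r_0\neq r_1$; combined with horizontality of the crossing this yields a corner straddling the row boundary inside $C$; then push the column boundary in by the symmetric argument using mixedness of $[R,C']$, obtaining the desired all-zone corner and the contradiction. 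A final symmetric case handles the ``vertical'' alternative.
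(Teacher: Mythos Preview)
Your proposal is considerably more complicated than needed, and the specific steps you outline have gaps. In your ``alternative, cleaner route'' you write: ``assume the crossing $2\times2$ at $(r_0,r_1;c_0,c_1)$ is horizontal; then use mixedness of $[R,C]$ to find a column of $C$ on which $r_0\neq r_1$.'' But mixedness of $[R,C]$ is a statement about the rows of $R$ over the columns of $C$; it says nothing whatsoever about the row $r_1\in R'$, so this step does not follow. More generally, anchoring the argument at the boundary rows $r_0,r_1$ and boundary columns $c_0,c_1$ is the source of all your difficulty: mixedness of a zone only guarantees that \emph{some} row is non-constant and \emph{some} column is non-constant, not that the specific boundary row or column is. Your ``sliding'' idea is meant to repair this, but you never say how to carry it out, and the propagation argument you sketch would require much more bookkeeping than you indicate (and still rests on the unjustified step above).

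The paper's proof is a three-line direct construction that uses only two of the four zones. If $M$ contains a $*$ entry, any $2\times 2$ submatrix through that entry with one row in each of $R,R'$ and one column in each of $C,C'$ is mixed by definition, done. Otherwise $M$ is $0/1$. Since $[R,C]$ is mixed it is not horizontal, so pick a row $r\in R$ non-constant on $C$; since $[R',C']$ is mixed it is not vertical, so pick a column $c'\in C'$ non-constant on $R'$. Set $e=M(r,c')$; choose $c\in C$ with $M(r,c)\neq e$ (possible since row $r$ on $C$ takes both values) and $r'\in R'$ with $M(r',c')\neq e$. The $2\times 2$ submatrix on $\{r,r'\}\times\{c,c'\}$ has row $r$ non-constant and column $c'$ non-constant, hence is a corner meeting all four zones. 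The idea you missed is that the corner need not sit at the block boundary: you may choose $r,r',c,c'$ freely inside their blocks, and only the two diagonally opposite zones are needed.
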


\begin{proof}
This is clear if $M$ contains a $*$. Consider a non constant row $r$ in $[R,C]$ and a non constant column $c'$ in $[R',C']$. Let $e$ be the entry $(r,c')$. Let $c\in C$ such that $e\neq (r,c)$ and $r'\in R'$ such that $e\neq (r',c')$. Now $\{r,r'\},\{c,c'\}$ is a corner.
\end{proof}

The \emph{contraction} $M'=M/\{R,R';C,C'\}$ is the $2\times 2$-matrix obtained by keeping a single value $x$ for each of the 4 zones, with the following rule: $x$ is the maximum value of the zone according to the order $0<1<*$. Thus, we get $*$ as soon as there exists a $*$, and we get $0$ only if the zone is full $0$.

\begin{lemma}\label{lem:contract}
If $M'$ is mixed, then $M$ is mixed.
\end{lemma}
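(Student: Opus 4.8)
The plan is to exhibit a corner inside $M$ whenever $M'$ is mixed and then invoke Lemma~\ref{lem:corner}; the one degenerate situation, an $M'$ that is mixed because it contains a $*$, is handled straight from the definition. First I would unwind the contraction rule: in $M'$ a zone receives value $*$ precisely when it contains a $*$, value $0$ precisely when all its entries are $0$, and value $1$ precisely when it is $*$-free and contains a $1$. Then I would split on whether $M'$ has a $*$ entry. If it does, the matching zone of $M$ contains a $*$, so $M$ contains a $*$ and, having at least two rows and at least two columns, is mixed by definition — done. The remaining case is $M'$ $*$-free, and then $M$ is $*$-free too, since the four zones partition $M$.

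In that case $M'$ is a $*$-free mixed $2\times 2$ matrix, hence neither horizontal nor vertical, so it has a non-constant row and a non-constant column. After possibly exchanging $R$ with $R'$ and $C$ with $C'$ — which affects neither whether $M$ is mixed nor the hypotheses — I may assume the non-constant row is the one of block $R$ and reads $(0,1)$; equivalently, $[R,C]$ is all-zero and $[R,C']$ contains a $1$, say at a position $(r,c')$ with $r\in R$, $c'\in C'$. Since $M'$ also has a non-constant column and its top row is $(0,1)$, either $[R',C]$ has value $1$ or $[R',C']$ has value $0$. In the first subcase I would pick a $1$ of $[R',C]$ at some $(r',c)$ with $r'\in R'$, $c\in C$, and look at the $2\times 2$ submatrix of $M$ on rows $\{r,r'\}$ and columns $\{c,c'\}$: it has $0$ at $(r,c)$, $1$ at $(r,c')$ and at $(r',c)$, and some value $x\in\{0,1\}$ at $(r',c')$ (not $*$, as $M$ is $*$-free); whatever $x$ is, the row indexed by $r$ and the column indexed by $c$ are non-constant, so this submatrix is neither horizontal nor vertical, i.e. a corner. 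In the second subcase $[R',C']$ is all-zero; taking any $r'\in R'$ and $c\in C$ (the blocks are non-empty) and looking at $\{r,r'\}\times\{c,c'\}$, the entry at $(r,c)$ is $0$, at $(r,c')$ is $1$, and at $(r',c')$ is $0$, so the row indexed by $r$ and the column indexed by $c'$ are non-constant, and again we obtain a corner. Either way $M$ contains a corner and hence is mixed.

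The only point requiring a little care is that we cannot appeal to Lemma~\ref{lem:4corner}: here the zones of $M$ need not all be mixed — two of them are all-zero — so the corner we build necessarily straddles zones and must be assembled from exactly the data we actually have, namely the all-zero zones, one guaranteed $1$ in each value-$1$ zone, and the global absence of $*$, leaving a single remaining entry free. Verifying that the two resulting $2\times 2$ patterns are mixed for either value of that free entry is immediate, so I do not expect a genuine obstacle; the content is just reading off carefully what the contraction rule preserves.
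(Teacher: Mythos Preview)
Your argument is correct. It differs from the paper's proof in that you go direct and build an explicit corner in $M$ via a case split on the shape of $M'$, whereas the paper argues by contrapositive: if $M$ is non-mixed (hence $*$-free, since it has at least two rows and two columns) it is horizontal or vertical, and one observes in one line that the contraction of a horizontal (resp.\ vertical) matrix remains horizontal (resp.\ vertical), so $M'$ is non-mixed as well. Your route is a bit longer and leans on Lemma~\ref{lem:corner}, while the paper's contrapositive avoids the case analysis entirely; on the other hand, your explicit corner construction makes the mechanism visible and is in the same spirit as the surrounding Lemmas~\ref{lem:4corner} and~\ref{lem:deletion}.
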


\begin{proof}
This is clear if $M'$ (thus equivalently $M$) contains a $*$. Let us assume by contrapositive that $M$ is non-mixed. If $M$ is vertical (resp. horizontal), then observe that $M'$ is also vertical (resp. horizontal).
\end{proof}
 
 We keep the same notations as before, and assume moreover that none of the four zones of $M$ is mixed (in particular $M$ has no value $*$). The \emph{horizontal-deletion} $M_H$ of $M$ is the matrix obtained from $M$ by setting all values to 0 in each zone which is not vertical (or equivalently each zone which is horizontal and non-constant). We similarly define the \emph{vertical-deletion} $M_V$.
 
\begin{lemma} \label{lem:deletion}
If $M_H$ is mixed, then $M$ is mixed.
\end{lemma}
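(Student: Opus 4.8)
The statement to prove is Lemma~\ref{lem:deletion}: if $M_H$ is mixed, then $M$ is mixed, where $M$ has two row blocks $R,R'$ and two column blocks $C,C'$ each of size at least two, none of the four zones is mixed, and $M_H$ is obtained from $M$ by zeroing out every horizontal non-constant zone. The plan is to argue by contrapositive: assume $M$ is non-mixed and deduce that $M_H$ is non-mixed. Since $M$ has no $*$ (by hypothesis no zone is mixed, and any $*$ in a $\geq 2\times 2$ block makes that block mixed), $M$ non-mixed means $M$ is horizontal or $M$ is vertical.

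First I would dispose of the easy case: if $M$ is horizontal, then every row of $M$ is constant, so every zone is horizontal, hence every non-constant zone gets zeroed and every constant zone is unchanged; the result $M_H$ still has all rows constant (a constant row stays constant, a zeroed row is constant), so $M_H$ is horizontal, in particular non-mixed. The substance is the case where $M$ is vertical, i.e. every column of $M$ is constant. Then each zone, being a contiguous submatrix, is also vertical. A zone that is vertical and horizontal is constant; a zone that is vertical and non-constant is \emph{not} horizontal, so it is not zeroed — the horizontal-deletion only touches horizontal non-constant zones. Therefore $M_H = M$ in this case (every zone is either constant, hence untouched, or vertical-non-constant, hence also untouched), so $M_H$ is vertical and non-mixed.

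Combining the two cases, if $M$ is non-mixed then $M_H$ is non-mixed, which is the contrapositive of the claim. I would also remark that by symmetry (transposing rows and columns) the analogous statement holds for the vertical-deletion $M_V$, which is presumably what the paper will use next. The only point requiring a little care is checking that a zone cannot simultaneously be ``horizontal non-constant'' and forced to stay under the vertical assumption — but that is exactly the observation that vertical $\cap$ horizontal $=$ constant, so a vertical non-constant zone is never horizontal and hence never zeroed; there is no real obstacle here, the lemma is a short case analysis relying on Lemma~\ref{lem:corner} only implicitly (through the remark that a $*$ in a large block is mixed) and otherwise on the definitions of horizontal/vertical/constant.
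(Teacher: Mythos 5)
Your approach — contrapositive plus a case split on $M$ horizontal vs.\ vertical — is the same as the paper's, and your treatment of the vertical case is clean; the observation that $M_H = M$ there, since a vertical zone is never zeroed, is a nice way to put it. But you have the difficulty mislocated: you call the horizontal case "easy" and the vertical case "the substance," when it is the reverse, and this leads you to leave a real gap in the horizontal case.

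Concretely, the claim "the result $M_H$ still has all rows constant (a constant row stays constant, a zeroed row is constant)" is not justified as written. A row of the block $R$ spans \emph{two} zones, $[R,C]$ and $[R,C']$, and the decision to zero out is made zone by zone. If $[R,C]$ were zeroed while $[R,C']$ were kept, a row that is constant equal to $1$ in $M$ would read $0$ on $C$ and $1$ on $C'$ in $M_H$ — no longer constant, and your argument does not rule this out. What rescues it, and what the paper states explicitly, is that when $M$ is horizontal the two zones $[R,C]$ and $[R,C']$ have identical content (every column of a horizontal matrix is the vector of row-constants, so both zones consist of that vector restricted to $R$, repeated across columns); hence they are both vertical or both non-vertical, so both kept or both zeroed together, and either way the rows of $R$ (and likewise $R'$) remain constant in $M_H$. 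Adding that one-sentence observation closes the gap and brings your proof into line with the paper's.
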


\begin{proof}
 Let us assume by contrapositive that $M$ is non-mixed. By assumption, there is no $*$ in  $M_H$. If $M$ is vertical, then observe that $M_H$ is also vertical. Now if $M$ is horizontal, note that the zones $R,C$ and $R,C'$ are either both set to 0 (if they are not vertical), or both left as in $M$. In both cases the rows of $R$ in $M_H$ are constant. The same applies to $R'$, so $M_H$ is horizontal. 
\end{proof}

Here is the key-definition of \cite{PS22}.   A $d$-division $\mathcal{D}$ of a matrix $M$ is a \emph{$d$-almost mixed minor} if for every $i \neq j \in [d]$, the zone $\mathcal{D}[i, j]$ is mixed. If $M$ does not have any $d$-almost mixed minor, we say that $M$ is \emph{$d$-almost mixed free}. By extension, a graph is \emph{$d$-almost mixed free} if we can order its vertices in such a way that its adjacency matrix is $d$-almost mixed free.

Observe that every $d$-almost mixed free matrix is also $d$-mixed free. Conversely, every $d$-mixed free matrix is also $2d$-almost mixed free. Indeed, if $M$ has a $2d$-almost mixed minor, then merging the first $d+1$ row blocks, and the last $d+1$ column blocks gives a $d$-mixed minor of $M$. Note that every submatrix of a $d$-(almost) mixed free matrix is also $d$-(almost) mixed free, hence every subgraph of a $d$-almost mixed free graph is also $d$-almost mixed free. 

Let $G$ be a graph with an RMP $\mathcal{P} = (V_1, \ldots, V_k)$. We say that $(G,\mathcal{P})$ is \emph{$d$-almost mixed free}, if for every coarsening $\cal P'$ of $\cal P$ into $d$ parts $(V'_1, \ldots, V'_d)$, where each $V'_i$ consists of consecutive parts of $\cal P$, some zone $[V'_i,V'_j]$, where $i\neq j$, is not mixed in the adjacency matrix of $G$. Note that we only speak here of restrictions on symmetric divisions of $G$, which encompass much larger classes than bounded twin-width. 

\begin{lemma} \label{lem:clique-quotient}
If $(G,\cal P)$ is $d$-almost mixed free, then  $\omega (G/{\cal P})\leq \omega(G)^d$.
\end{lemma}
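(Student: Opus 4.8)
The plan is to reduce to the case where $G/\mathcal P$ is a clique, and then run an induction on the number of parts that either splits the part-interval at a non-mixed symmetric cut — which makes the clique number drop on a large piece — or, when no such cut is available, builds an explicit $d$-almost mixed minor. For the reduction, restrict $G$ to the union of the parts indexed by a maximum clique of $G/\mathcal P$ and let $\mathcal P'$ be the induced RMP: then $G'/\mathcal P'$ is a clique on $\omega(G/\mathcal P)$ parts, $\omega(G')\le\omega(G)$, and $(G',\mathcal P')$ is still $d$-almost mixed free, because every coarsening of $\mathcal P'$ into $d$ parts is obtained from one of $\mathcal P$ by deleting rows and columns, and deleting rows or columns of a non-mixed zone keeps it non-mixed (a submatrix of a horizontal matrix is horizontal, of a vertical one vertical, and no new $*$ appears). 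So it suffices to prove: if $\mathcal P=(V_1,\dots,V_k)$, every pair $V_i,V_j$ with $i<j$ has an edge, and $(G,\mathcal P)$ is $d$-almost mixed free, then $k\le\omega^d$, where $\omega:=\omega(G)\ge 2$ (the case $\omega\le 1$ forces $k\le 1$).

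For the clique case I would record two facts. Write $D_{ij}\subseteq V_j$ for the set of vertices of $V_j$ dominating $V_i$; since $V_i$ is a module with respect to $V_j$ one has $N(v)\cap V_j=D_{ij}$ for every $v\in V_i$, and $D_{ij}\neq\emptyset$. First, if a symmetric cut $Z_t=[V_1\cup\cdots\cup V_t,\ V_{t+1}\cup\cdots\cup V_k]$ is non-mixed, then it is horizontal — forcing $D_{ij}=V_j$ for all $i\le t<j$ — or vertical — forcing $D_{ij}$ to be independent of $i\le t$ for each $j>t$; in either case some vertex lying in a part after $t$ dominates all of $V_1\cup\cdots\cup V_t$, so $\omega(G)\ge 1+\omega(G[V_1\cup\cdots\cup V_t])$ (and the analogous statement holds for any interval of parts in place of the prefix). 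Second, for any interval $I$ of parts the restriction $(G[\bigcup_{i\in I}V_i],\mathcal P_{|I})$ is again $d$-almost mixed free: from a $d$-grouping of $I$ build one of $\mathcal P$ by prepending the parts before $I$ to its first group and appending those after $I$ to its last group, then intersect the resulting non-mixed zone with $I$ (again, deleting rows and columns preserves non-mixedness).

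Now induct on $k$. If the last cut $Z_{k-1}$ is non-mixed, the first fact gives $\omega(G[V_1\cup\cdots\cup V_{k-1}])\le\omega-1$; this graph is a clique quotient on $k-1$ parts and is $d$-almost mixed free by the second fact, so by induction $k-1\le(\omega-1)^d$, whence $k\le(\omega-1)^d+1\le\omega^d$. It remains to handle the case where $Z_{k-1}$ — and, more generally, every \emph{usable} symmetric cut — is mixed, and there one should show that $k>\omega^d$ is impossible by exhibiting a $d$-almost mixed minor: choose cut points $0=c_0<c_1<\cdots<c_{d-1}<c_d=k$ greedily, taking each $c_a$ as small as possible so that every zone between the newly formed group and all earlier groups is mixed; if a step fails, the failing interval, via the first fact applied to that interval, forces a clique of $G$ too large to be compatible with $k\le\omega^d$, while if all $d$ groups get built one obtains a $d$-almost mixed minor, contradicting $d$-almost mixed freeness.

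Everything except this last case is routine; the hard part is the quantitative bookkeeping there, namely showing that the greedy construction either terminates within $k\le\omega^d$ parts or produces a genuine $d$-almost mixed minor. Concretely, one must argue that the failure of a usable non-mixed cut propagates in such a way that each of the $d$ "levels" of the construction costs at most a single factor of $\omega$ — this is where the exponent $d$ in $\omega^d$ comes from — and this seems to require using the structure of the sets $D_{ij}$ rather carefully, together with the fact (Lemma~\ref{lem:corner}) that a proper non-empty $D_{ij}$ in one part and two distinct values $D_{ij}\neq D_{i'j}$ in a later interval already produce a corner, hence a mixed zone.
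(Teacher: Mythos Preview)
Your reduction to the clique case is correct, and your ``first fact'' --- a non-mixed cut yields a vertex dominating one side, hence $\omega$ drops there --- is exactly the mechanism the paper exploits. The gap is that you leave the main case unfinished: when $Z_{k-1}$ is mixed you only sketch a greedy choice of cut points and concede that the ``quantitative bookkeeping'' is not done. As stated, nothing forces each greedy step to cost only a factor of~$\omega$, and the failure analysis you allude to (``the failing interval forces a clique too large'') is never carried out; yet this is the entire substance of the lemma, since the non-mixed case you do handle only decrements $\omega$ and never touches~$d$.

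The paper avoids your case split by choosing the cut in the right place. Take the \emph{smallest} $\ell$ with $\omega(G[V_1\cup\cdots\cup V_\ell])=\omega$ and set $Y=V_1\cup\cdots\cup V_\ell$. Then for every single $i>\ell$ the zone $[Y,V_i]$ is automatically mixed: if it were horizontal or vertical, some vertex of $V_i$ would dominate $Y$ (your first fact, applied to one later part at a time), yielding an $(\omega{+}1)$-clique. Since a mixed subzone forces the enclosing zone to be mixed, $Y$ can be prepended to any $(d{-}1)$-almost mixed minor of the suffix $(V_{\ell+1},\dots,V_k)$ to produce a $d$-almost mixed minor of $(G,\mathcal P)$; hence the suffix is $(d{-}1)$-almost mixed free. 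This yields the clean double recurrence
\[
\phi(\omega,d)\ \le\ \underbrace{\phi(\omega-1,d)}_{\text{first }\ell-1\text{ parts}}\ +\ 1\ +\ \underbrace{\phi(\omega,d-1)}_{\text{suffix}},
\]
which solves to at most $\binom{\omega+d-2}{d-1}+1\le\omega^d$. The idea you are missing is precisely this: do not case-split on whether a fixed cut is mixed and then try to manufacture a mixed minor in the bad case --- instead, pick the cut where the full clique number is first attained, so that mixedness with every later block is guaranteed and both parameters $\omega$ and $d$ decrement simultaneously.
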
 

\begin{proof}
We write $\omega:=\omega(G)$ and denote $\mathcal{P} = (V_1, \ldots, V_k)$. Let $\phi$ be such that $\omega(G/{\cal P}) \leq \phi(\omega, d)$. We have $\phi(\cdot, 1)=0$ (empty graph) and $\phi(1, \cdot) = 1$ (edgeless graph). We assume $\omega\geq 2$ and $d\geq 2$ and show that $\phi(\omega, d)= \phi(\omega-1, d)+\phi(\omega, d-1)+1$ upper bounds $\omega(G/{\cal P})$. We can restrict ourselves to a maximal clique of $G/{\cal P}$, so we can assume that there is an edge between $V_i$ and $V_j$ whenever $i < j \in [k]$.    

Let us consider the smallest $\ell$ such that $\omega(G[V_1\cup \dots \cup V_{\ell}])=\omega$. Denote by $Y$ the set $V_1\cup \dots \cup V_{\ell}$, and consider any $V_i$ where $i\geq \ell +1$. Note that $Y$ is not a module with respect to $V_i$, as some vertex in $V_i$ would dominate $Y$, hence forming a clique of size $\omega +1$ in $G$. Conversely, if $V_i$ is a module with respect to $Y$, since $\cal P$ is an RMP, and there exists an edge between all pairs of parts, $V_i$ would dominate $Y$, with the same contradiction.

Consider the graph $G'=G[V_{\ell+1}\cup \dots \cup V_{k}]$ and its RMP $\mathcal{P'} = (V_{\ell +1}, \ldots, V_k)$. We claim that $(G',\cal P')$ is $d-1$-almost mixed free, otherwise any $d-1$-almost mixed minor coarsening $(V'_1, \ldots, V'_{d-1})$ of $\mathcal{P'}$ could be extended to the $d$-almost mixed minor $(Y,V'_1, \ldots, V'_{d-1})$ of $(G,\cal P)$. 

Thus $k=\omega (G/\cal P)$ satisfies by induction that $k\leq \ell + \phi(\omega, d-1)$. And since the first $\ell -1$ parts do not contain a clique of size $\omega$, we have $k \leq \phi(\omega - 1, d) + 1 + \phi(\omega, d-1)= \phi(\omega, d)$. Setting  $\psi(\cdot,\cdot)=\phi(\cdot,\cdot)-1$, we have that $\psi (\omega,d)=\psi(\omega - 1, d) + \psi(\omega, d-1)$. Moreover, we both have $\psi (\omega,1)=-1$ and $\psi (1,d)=0$, so $\psi (\omega,d)\leq \binom{\omega + d - 2}{d-1}\leq \omega ^{d-1}$.
\end{proof}

\begin{proposition} \label{prop:RMPpcb} Let $\cal C$ be a class of graphs with polynomial $\chi$-bounding function $f(x)=x^c$. If $\cal P$ is a $\cal C$-RMP of $G$ such that $(G,\cal P)$ is $d$-almost mixed free, then $\chi (G)\leq \omega ^{cd}$.
\end{proposition}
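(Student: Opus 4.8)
The plan is to combine the two lemmas of this section. The second lemma tells us that the contraction $G/\mathcal{P}$ has clique number at most $\omega(G)^d$, and the definition of a $\mathcal{C}$-RMP tells us that $G/\mathcal{P}$ is itself a transversal minor of $(G,\mathcal{P})$, hence belongs to $\mathcal{C}$. So the first step is simply to invoke $\chi(G)\le\chi(G/\mathcal{P})$ (recorded earlier, right after the definition of $G/\mathcal{P}$: contracting each stable part of an RMP cannot decrease the chromatic number, since a proper colouring of $G/\mathcal{P}$ pulls back to a proper colouring of $G$).

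Having reduced to $G/\mathcal{P}$, I would chain the bounds: since $G/\mathcal{P}\in\mathcal{C}$ and $\mathcal{C}$ has $\chi$-bounding function $f(x)=x^c$,
\[
\chi(G)\;\le\;\chi(G/\mathcal{P})\;\le\;\bigl(\omega(G/\mathcal{P})\bigr)^c\;\le\;\bigl(\omega(G)^d\bigr)^c\;=\;\omega(G)^{cd},
\]
where the third inequality is exactly the preceding lemma applied to $(G,\mathcal{P})$, which is $d$-almost mixed free by hypothesis.

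The only point that needs a line of justification is that $G/\mathcal{P}$ really is a member of $\mathcal{C}$: one must check it is a genuine transversal minor in the sense of the definition, i.e. take $j_1<\dots<j_k$ to be all indices with $V_{j_i}$ non-empty (empty parts contribute nothing and can be dropped) and $W_{j_i}=V_{j_i}$; then $G/\{W_{j_1},\dots,W_{j_k}\}$ is by definition $G/\mathcal{P}$, and since $\mathcal{P}$ is a $\mathcal{C}$-RMP all its transversal minors lie in $\mathcal{C}$. I do not anticipate any real obstacle here — the work was already done in the two lemmas; this proposition is just their conjunction. (If one wants the statement to be entirely clean one could also note that discarding empty parts does not affect $d$-almost mixed freeness, since a coarsening of the reduced partition lifts to a coarsening of $\mathcal{P}$.)
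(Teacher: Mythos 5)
Your proposal is correct and is essentially identical to the paper's proof: both chain $\chi(G)\le\chi(G/\mathcal{P})\le\omega(G/\mathcal{P})^c\le\omega(G)^{cd}$ using the preceding clique-number lemma and the fact that $G/\mathcal{P}$, as a transversal minor, lies in $\mathcal{C}$. Your extra remark spelling out that $G/\mathcal{P}$ is literally a transversal minor (take $W_{j_i}=V_{j_i}$) is just making explicit what the paper states right after defining $G/\mathcal{P}$.
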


\begin{proof}
   Since $\omega(G/{\cal P}) \leq \omega(G)^{d}$ by Lemma \ref{lem:clique-quotient}, we have $\chi(G)\leq \chi(G/{\cal P})\leq \omega(G/{\cal P})^c\leq \omega(G)^{cd}$.
\end{proof}

\section{Polynomial $\chi$-boundedness of bounded twin-width graphs}

By Lemma~\ref{lem:mixedtww}, in order to prove our main result, Theorem~\ref{th:BT}, we just have to show that the class of $d$-mixed free graphs is polynomially $\chi$-bounded. Since mixed freeness is functionally
equivalent to almost mixed freeness, we only consider this last notion.

To prove that a class $\cal C$ is \pcb, a strategy is to show that every graph $G$ of $\cal C$ has a vertex-partition or an edge-partition into a bounded number of graphs, each of them belonging to some known \pcb~class $\cal C '$. We will use this argument several times here.

\begin{theorem}\label{thm:damf-pcb} The class of $d$-almost mixed free graphs is \pcb.
\end{theorem}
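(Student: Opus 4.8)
The plan is to prove this by induction on $d$, using the two operations developed in the previous sections together with the twin-width reduction of Pilipczuk and Soko\l{}owski.

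\textbf{Base case.} For small $d$ (say $d \le 2$), a $d$-almost mixed free graph has bounded twin-width in a trivial way — in fact the adjacency matrix admits no $d$-division with all off-diagonal zones mixed, which forces the graph to be very simple (e.g. a bounded union of cographs-like pieces, or directly bounded clique-width), and in particular \pcb{}. Alternatively one can start the induction at the point where $d$-almost mixed free already implies bounded $\chi$ via Marcus–Tardos (Theorem~\ref{th:marcustardos}) applied crudely, since bounded twin-width graphs were already known to be $\chi$-bounded; but to keep the bound polynomial we instead want the structural recursion below all the way down.

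\textbf{Inductive step: the structure theorem.} Assume every $(d-1)$-almost mixed free graph is \pcb{} with bounding function $x^{c_{d-1}}$. Let $G$ be $d$-almost mixed free; fix a vertex ordering $v_1, \dots, v_n$ witnessing this, and build the canonical delayed decomposition tree $(T_d, g)$ of Section~2 for that ordering. By Corollary~\ref{cor:delay}, it suffices to show that the class of graphs $\{g(x) : x \in T_d\}$ (over all $d$-almost mixed free $G$ and all nodes $x$) lies in a single \pcb{} class. So I would analyze a fixed internal node $x$ and its graph $g(x)$, whose vertices are the grandchildren of $x$. The grandchildren of $x$ that share a parent $y$ (a child of $x$) are the local modules of the interval $y$ with respect to $V \setminus y$; the key point, exactly as in~\cite{PS22}, is that because $G$ is $d$-almost mixed free, the restriction of $G$ (or rather the relevant quotient) to each child $y$ of $x$ — equivalently the induced structure on the local modules inside $y$ — is $(d-1)$-almost mixed free: refining a zone by a further mixed $d{-}1$-division would produce a $d$-almost mixed minor of $G$. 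Moreover, the local-module partition of an interval that is \emph{not} a module is precisely an RMP-like structure (consecutive stable-ish parts, each a module with respect to what comes after), and a coarsening into $d$ parts with all off-diagonal zones mixed is forbidden; so $(G[y], \text{local modules})$ is $d$-almost mixed free as an RMP, which by Proposition~\ref{prop:RMPpcb} (combined with the induction hypothesis for the transversal minors) gives $\chi$ polynomial in $\omega$. One then has to combine, across the (bounded-in-number? — no, unbounded-in-number) children of $x$: here $g(x)$ restricted to grandchildren with distinct parents records only the cousin-edges, and cousins are modules of each other, so $g(x)$ is obtained from the quotient graph $G/\{\text{children of }x\}$ (which is again $d$-almost mixed free, even $d$-mixed free as a \emph{symmetric} division, since children are consecutive intervals) by blowing up each vertex into the local-module-quotient of the corresponding child. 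Thus $g(x)$ is a substitution of an RMP-type graph into a $d$-mixed free graph — and both ingredients are \pcb{} (the former by Proposition~\ref{prop:RMPpcb}, the latter by the induction / Marcus–Tardos degeneracy bound). Finally, by Lemma~\ref{lem:delayedpartition} / Corollary~\ref{cor:delay}, $G$ being the realization $R(T_d,g)$ with all $g(x)$ in this \pcb{} class yields $\chi(G) \le \omega^{c_d}$ with $c_d$ an explicit polynomial-in-$c_{d-1}$ constant (roughly $6 c_d' + 22$ where $c_d'$ is the exponent for the $g(x)$ class, itself linear in $c_{d-1}$ and in $mt_d$).

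\textbf{Main obstacle.} The delicate part is the containment/combination step: showing precisely that each $g(x)$ decomposes as a bounded union of right extensions of $(d{-}1)$-almost mixed free graphs (and bounded-$\chi$ graphs), and verifying the $d$-almost-mixed-freeness is genuinely inherited by the RMP on each child and by the quotient $G/\{\text{children}\}$. This requires carefully tracking that a mixed $k$-division of any of these derived structures lifts back to a mixed (or almost-mixed) division of $G$'s original adjacency matrix — the edge-partition/vertex-quotient hybrid of~\cite{PS22} — and that the splitting of $g(x)$ into an "edge part" coming from the RMP-quotients within children and a "cousin part" coming from $G/\{\text{children}\}$ really is a bounded edge-union, so that $\chi$ multiplies by a bounded factor at each of the two levels. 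I expect the bookkeeping of which zones can be non-mixed (the single allowed non-mixed off-diagonal zone, and the free diagonal) to be where all the work lies, mirroring the "difficult technical part" acknowledged in the introduction.
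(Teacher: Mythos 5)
Your plan matches the paper's at the macroscopic level -- induct on $d$, build the delayed decomposition tree, then show each $g(x)$ lies in a bounded union of right extensions of $(d-1)$-almost mixed free graphs, and close with Corollary~\ref{cor:delay} and Proposition~\ref{prop:RMPpcb}. But there is a genuine gap at exactly the place you flag as the ``main obstacle'', and the specific claim you make there is false. You assert that $(G[y],\text{local modules})$ ``is precisely an RMP-like structure (consecutive stable-ish parts, each a module with respect to what comes after)''. It isn't. A local module $L_i$ of an interval $I$ is a module with respect to $V\setminus I$, \emph{not} with respect to the other local modules inside $I$: two local modules $L_i,L_j$ can be fully mixed against each other, and each $L_i$ need not be a stable set in $G$. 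So you cannot directly hand the local-module partition to Proposition~\ref{prop:RMPpcb}.

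The paper bridges this gap in three moves that are absent from your write-up. (i) It first passes from $G[I]$ to the graph $H$ in which all intra-$L_i$ edges are deleted -- this is forced because RMP parts must be stable sets, and is also why it suffices to bound $\chi(H)$, since $g(x)$ blown up by stable sets is exactly $H$. (ii) It invokes Marcus--Tardos (Theorem~\ref{th:marcustardos}): the $L_i$'s form a $k$-division of a $d$-mixed free matrix, so the ``mixed-pair'' graph on $[k]$ has $O(k)$ edges and hence chromatic number $mt_d+1$; one then vertex-partitions the set of local modules into $mt_d+1$ groups in which no two are mixed. (iii) Within such a group, non-mixedness means one of $L_i\rightarrow L_j$ or $L_j\rightarrow L_i$ holds for every pair, so one edge-partitions $H'$ into $H'_{\rightarrow}$ and $H'_{\leftarrow}$, and \emph{these} are genuine RMPs. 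Only after this bounded vertex- and edge-union does Proposition~\ref{prop:RMPpcb} become applicable. There are two further omissions worth noting: the paper first splits $V'=\{v_2,\dots,v_{n-1}\}$ into four pieces by neighbourhood to $\{v_1,v_n\}$ so that the endpoints are modules with respect to $G'$ -- this is needed in the lifting argument (the zone $[R'_i,C'_d]$ must be mixed, and the proof uses that $v_n$ does not distinguish vertices of $V'$); and the local modules must be further split into ``left'' and ``right'' ones (distinguished from the left vs.\ right of $I$) for the same boundary-mixedness reason. Without (i)--(iii) and the boundary bookkeeping, the reduction to Proposition~\ref{prop:RMPpcb} does not go through.
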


\begin{proof}
The proof is by induction on $d$. For $d=2$, if $G$ has a 2-almost mixed free adjacency matrix, $G$ is a cograph (see \cite{PS22}), hence $G$ is perfect so the property holds. Now, let $d \geq 3$ and consider a graph $G$, which is $d$-almost mixed free with respect to the vertex ordering $v_1,\ldots ,v_n$. We first partition the set of vertices $V' =\{v_2,\dots, v_{n-1}\}$ into four subsets $V'_{00},V'_{01},V'_{10},V'_{11}$ according to their neighbourhood in $\{v_1,v_n\}$. For instance, $V'_{01}=(V'\setminus N(v_1))\cap N(v_n)$. It suffices to show that all four graphs $G[V'_{ij}]$ belong to some \pcb~class. Hence, without loss of generality, we can assume that $V'$ is a module in $G$. 
    We now consider the delayed decomposition tree $(T_d,g)$ of $G':=G[V']$ and consider the class $\cal C$ containing all $g(x)$ and their induced subgraphs. By Corollary~\ref{cor:delay}, we just have to show that $\cal C$ is \pcb. The graphs $g(x)$ are obtained by starting from an interval $I=\{v_s,\dots, v_t\}$ of vertices of $G'$, partitioning it into local modules $L_1,\dots, L_k$, and then partitioning each local module into local submodules. Let $H$ be the graph we obtain from $G[I]$ by removing all edges inside all local modules $L_i$. Note that $H$ can also be obtained by substituting the vertices of $g(x)$ by stable sets. This does not change $\chi$ and $\omega$. Therefore, to show that $\mathcal{C}$ is \pcb, it suffices to prove the following.

\begin{claim} Let $I=\{v_s,\dots, v_t\}$ be any interval of vertices of $G'$. Consider its partition into local modules $L_1,\dots, L_k$ and denote by $H$ the graph on vertex set $I$ obtained from $G[I]$ by deleting the edges inside the local modules. Then, $H$ is a bounded (vertex and edge) union of graphs from hereditary \pcb~classes.
\end{claim}

Note that the claim holds when $I$ is a non trivial module since it is cut into two parts. Indeed, in this case $H$ is bipartite and thus $\chi$-bounded by 2. Assume now that we have local modules $L_1,\dots ,L_k$, and consider all pairs $i<j$ such that $G[L_i,L_j]$ is mixed (call \emph{mixed pairs}). The $L_i$'s form a $k$-division of the adjacency matrix of $G[I]$, which is $d$-mixed free. Thus, by Theorem \ref{th:marcustardos}, the graph $R$ on vertex set $[k]$ whose edges correspond to mixed pairs has at most $\frac{mt_d}{2} \cdot k$ edges. In particular, $R$ can be vertex-colored into $(mt_d+1)$ classes. In other words, one can partition the set of local modules into $(mt_d+1)$ subsets, in which local modules are not pairwise mixed. We denote by $L'$ such a subset of local modules. To prove our claim, we just have to show that $H':=H[L']$ belongs to a \pcb~class of graphs.

Observe that for every $i<j$ and $L_i,L_j\in L'$, we have that $L_i$ is a module in $H[L_i\cup L_j]$ (denoted $L_i\rightarrow L_j$) or $L_j$ is a module in $H[L_i\cup L_j]$, that is $L_j\rightarrow L_i$. Note that if we both have $L_j\rightarrow L_i$ and $L_i\rightarrow L_j$, we have all edges or no edge between $L_i$ and $L_j$. We now define two subgraphs $H'_{\rightarrow}$ and $H'_{\leftarrow}$ of $H'$: in $H'_{\rightarrow}$ we only keep the edges of $H'$ between pairs $L_i\rightarrow L_j$ where $i<j$, and in $H'_{\leftarrow}$ we only keep the edges of $H'$ between pairs $L_i\leftarrow L_j$ where $i<j$.
Note that $H'=H'_{\rightarrow}\cup H'_{\leftarrow}$, and thus we just have to show that (for instance) $H'_{\rightarrow}$ belongs to a \pcb~class of graphs.

The graph $H'_{\rightarrow}$ with the partition $L'$ is a right module partition. Note that the same holds for $H'_{\leftarrow}$ if we reverse the order of the local modules. 
We further partition $H'_{\rightarrow}$: let us say that a local module $L_i$ is \emph{left} if $i>1$ and there is a vertex $v_j$ among $v_1,v_2,\dots,v_{s-1}$ (i.e. to the left of $I$) which distinguishes $L_{i-1}$ from $L_i$. Precisely, $v_j$ is not joined in the same way to the last vertex of $L_{i-1}$ and to the first of $L_i$. If $L_i$ (with $i>1$) is not left, then it is \emph{right} (and indeed some vertex $v_j$ to the right of $I$ distinguishes $L_{i-1}$ from $L_i$). We neglect $L_1$ in this definition (it only adds 1 to the chromatic number of $H'$). We now partition $L'$ into $L'_{ri}$ containing all right local modules $L_i$ of $L'$, and $L'_{le}$ containing the left local modules. Again, by vertex partition, we just have to show that the RMP $H'_{\rightarrow,ri}$ which is the induced restriction of $H'_{\rightarrow}$ to $L'_{ri}$ is \pcb. To apply Proposition~\ref{prop:RMPpcb}, we first show that the transversal minors of  $(H'_{\rightarrow,ri},L'_{ri})$ are $d-1$-almost mixed free, which by induction implies that they are \pcb. Then we argue that $(H'_{\rightarrow, ri},L'_{ri})$ has no large almost mixed minor. It suffices here to show that it is $2d$-almost mixed free. These are our last results.

\begin{claim}[\cite{PS22}]\label{claim:twwreduction}
    Every transversal minor of  $(H'_{\rightarrow,ri},L'_{ri})$ is $d-1$-almost mixed free.
\end{claim}

\begin{proof}
Assume for contradiction that we can find a sequence of local modules $L'_1,\dots ,L'_t$ in $L'_{ri}$, each of them containing a non empty subset of vertices $W_1,\dots ,W_t$, such that the graph $Q=H'_{\rightarrow,ri}/\{W_1,\dots ,W_t\}$ has a $d-1$-almost mixed minor. The vertices of $Q$ are denoted $W=\{w_1,\dots ,w_t\}$, where $W_i$ is contracted to $w_i$. Moreover, there exist two partitions of $W$ into consecutive blocks of vertices $(R_1,\dots ,R_{d-1})$ and $(C_1,\dots ,C_{d-1})$ such that $Q$ is mixed on the zone $[R_i,C_j]$ whenever $i\neq j$ (thus all $R_i$ and $C_j$ have size at least 2). We now show how to "lift" these partitions to $G$ in order to get a contradiction.

Consider any partition ${\cal R}'=(R'_1,\dots ,R'_{d-1})$ of $I$ (where parts consist of consecutive local modules) satisfying that $w_i\in R_j$ implies $L'_i \subseteq R'_j$. Similarly ${\cal C}'=(C'_1,\dots ,C'_{d-1})$ partitions $I$ and $w_i\in C_j$ implies $L'_i \subseteq C'_j$. We now extend the partitions ${\cal R}',{\cal C}'$ of $I$ to the whole vertex set $V$ of $G$ by first setting $R'_1:=R'_1\cup \{v_1,\dots ,v_{s-1}\}$ and $C'_1:=C'_1\cup \{v_1,\dots ,v_{s-1}\}$, and then adding a new part $\{v_{t+1},\dots ,v_{n}\}=R'_d=C'_d$ to both ${\cal R}'$ and ${\cal C}'$. These new partitions are called ${\cal R}$ and ${\cal C}$. Observe that if we were working with $H'_{\rightarrow,le}$, we would have added the part $R'_0=C'_0=\{v_1,\dots ,v_{s-1}\}$ to both ${\cal R}'$ and ${\cal C}'$ and extended the parts $R'_{d-1}$ and $C'_{d-1}$ by adding $\{v_{t+1},\dots ,v_{n}\}$.

We now show that ${\cal R}$ and ${\cal C}$ form a $d$-almost mixed minor for $G$, which will be our contradiction. We need to focus on two points: the added parts $R'_d,C'_d$ should be mixed with respect to the others, and the original mixed zones $[R_i,C_j]$ of $Q$ should yield mixed zones $[R'_i,C'_j]$ of $G$. We separate the two arguments:
\begin{itemize}
    \item Consider first some zone $[R'_i,C'_d]$ where $i<d$ (similar argument for $[R'_d,C'_i]$). By the fact that $R_i$ contains two vertices $w_a,w_b$, the part $R'_i$ contains the right local modules $L'_a,L'_b$ (where $a<b$). Focus now on the vertex $v_j$ which is the first vertex of $L'_b$, and note that $v_{j-1}\in R'_i$ since $L'_a\subseteq R'_i$. Since $L'_b$ is a right local module, there exists $v_k$, where $k>t$ such that $v_k$ is differently joined to $v_{j-1}$ and $v_j$. Recall that $v_n$ is joined in the same way to $v_{j-1}$ and $v_j$. Since $v_{j-1},v_j\in R'_i$ and $v_k,v_n\in C'_d$, they witness the fact that $[R'_i,C'_d]$ is mixed. 
    \item Now, consider any zone $[R'_i,C'_j]$ where $i,j<d$ and $i\neq j$. If the zone $[R_i,C_j]$ contains a $*$ (i.e. some $w_a$ both belongs to $R_i$ and $C_j$), then $[R'_i,C'_j]$ also contains a $*$. Otherwise, by Lemma \ref{lem:corner}, $R_i$ contains two vertices $w_a,w_b$ and $C_j$ contains two vertices $w_c,w_d$ such that $\{w_a,w_b\},\{w_c,w_d\}$ is a corner. Moreover, since there is no $*$ value, we have $a<b<c<d$ or $c<d<a<b$. Without loss of generality, we assume $a<b<c<d$. By Lemma~\ref{lem:contract}, the restriction of the adjacency matrix of $H'_{\rightarrow,ri}$ on $[W_a\cup W_b,W_c\cup W_d]$ is mixed since its contraction is the corner $\{w_a,w_b\},\{w_c,w_d\}$. So the submatrix $[L'_a\cup L'_b,L'_c\cup L'_d]$ is also mixed.
    By definition of $H'_{\rightarrow,ri}$, if $L'_a$ (or $L'_b$) is not a module with respect to $L'_c$ (or to $L'_d$), then the zone $[L'_a,L'_c]$ is set to 0. In other words, the adjacency matrix of $H'_{\rightarrow,ri}$ restricted to $[L'_a\cup L'_b,L'_c\cup L'_d]$, is the horizontal-deletion of the adjacency matrix of $G$. Thus the zone $[R'_i,C'_j]$ is mixed by Lemma~\ref{lem:deletion}.
\end{itemize}
\end{proof}
\begin{claim}
    The pair $(H'_{\rightarrow,ri},L'_{ri})$ is $2d$-almost mixed free.
\end{claim}

\begin{proof}
    Assume for contradiction that we can find a coarsening $W'_1,\dots ,W'_{2d}$ of $L'_{ri}$ which forms a $2d$-almost mixed minor of $H'_{\rightarrow,ri}$. We now set $W_i=W'_{2i-1}\cup W'_{2i}$ for all $i=1,\dots ,d$. 
    By Lemma~\ref{lem:corner} every (mixed) zone $[W_i,W_j]$ with $i\neq j$ of $H'_{\rightarrow,ri}$ contains a corner $\{w_a,w_b\},\{w_c,w_d\}$.  By Lemma~\ref{lem:4corner}, we can assume that $w_a,w_b,w_c,w_d$ belong respectively to $W'_{2i-1},W'_{2i},W'_{2j-1},W'_{2j}$, hence to respective distinct local modules $L'_a,L'_b,L'_c,L'_d$.
    We assume without loss of generality that $i<j$, and thus $a<b<c<d$. 
    By definition of $H'_{\rightarrow,ri}$, if $L'_a$ (or $L'_b$) is not a module with respect to $L'_c$ (or to $L'_d$), then the zone $[L'_a,L'_c]$ is set to 0. In other words, the adjacency matrix of $H'_{\rightarrow,ri}$ restricted to $[L'_a\cup L'_b,L'_c\cup L'_d]$, is the horizontal-deletion of the adjacency matrix of $G$. Hence the zone $[W_i,W_j]$ is mixed in $G$ by Lemma~\ref{lem:deletion}. Thus $G$ restricted to $W_1,\dots ,W_d$ has a $d$-almost mixed minor, a contradiction.
\end{proof}
We can conclude the proof of Theorem~\ref{thm:damf-pcb} by using the two previous Claims and Proposition~\ref{prop:RMPpcb}.
\end{proof}

Theorem~\ref{th:BT} now follows from Lemma~\ref{lem:mixedtww} and Theorem~\ref{thm:damf-pcb}. From the previous proof, the order of magnitude for the $\chi$-bounding function of graphs with no $d$-almost mixed minor is $\omega^{d^{O(d)}}$.

\section{More on the $\chi$-boundedness of substitution-closure}\label{sec:subs}

The goal of this section is to provide a proof of the following theorem by slightly modifying the original argument of Chudnovsky, Penev, Scott and Trotignon~\cite{CPST13}.

\begin{theorem}\label{th:chibordec}
If $\cal C$ is hereditary and \pcb~with function $\chi \leq \omega^k$, then ${\cal C}_s$ is \pcb~with function $\chi \leq \omega^{2k+3}$
\end{theorem}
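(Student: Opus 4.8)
The plan is to revisit the Chudnovsky--Penev--Scott--Trotignon argument through the lens of a decomposition tree with respect to substitutions, exactly as the introduction promises. First I would recall the standard modular decomposition tree $T$ of a graph $G \in {\cal C}_s$: each internal node is either a \emph{prime} node (whose associated quotient graph $g(x)$ is a prime induced subgraph of $G$, hence lies in $\cal C$), a \emph{series} node (complete join of its children), or a \emph{parallel} node (disjoint union of its children); the leaves are the vertices of $G$. Since $\cal C$ is hereditary, every $g(x)$ at a prime node is in $\cal C$ and so satisfies $\chi(g(x)) \le \omega(g(x))^k \le \omega(G)^k$. The goal is to produce a proper colouring of $G$ using at most $\omega(G)^{2k+3}$ colours, built bottom-up along $T$.

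The key quantitative idea is to colour each node $x$ of $T$ with a palette whose size is controlled by a single parameter, the local clique number $\omega(G[x])$, rather than the global $\omega$. For a leaf, one colour suffices. For a parallel node, the children use disjoint vertex sets that are pairwise anticomplete, so we may reuse the same palette and the number of colours needed is the maximum over the children — no blow-up. The real work is at series and prime nodes, where cliques from different branches concatenate, so $\omega(G[x])$ is (roughly) additive/combinable across children. Here I would combine, for each child $x_i$ with quotient-vertex $i$ in $g(x)$, a colouring of the blow-up $G[x_i]$ with a colouring of $g(x)$ itself (using its vertex $i$'s colour as a ``coordinate''), taking a product colouring. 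To keep the exponent at $2k+3$ rather than something larger, the trick — and this is where tweaking the tree matters — is to first contract maximal runs of series (resp.\ parallel) nodes so that series and parallel nodes strictly alternate along any root-to-leaf path, and then to charge the two ``$+1$'' overheads (one for series-type merging, one for the alternation bookkeeping) only once per pair of levels, rather than once per level; amortising the cost of the prime-node factor $\omega^k$ over the two types of internal nodes then yields $2k$ in the exponent and the additive constant $3$ absorbs the low-order terms.

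The main obstacle I expect is the bookkeeping that makes the exponent come out at exactly $2k+3$: one must set up an induction hypothesis of the form ``$\chi(G[x]) \le \omega(G[x])^{2k+3}$'' (or a slightly stronger statement carrying an auxiliary invariant, e.g.\ a bound that is cleaner at prime nodes, like $\chi(G[x]) \le c\cdot \omega(G[x])^{2k+2}$ for series/parallel nodes) and verify that it is preserved through a prime step, a series step, and a parallel step without ever losing more than the budgeted amount. The prime step is where the factor $\omega^k$ from $\cal C$ enters and is the delicate one: one needs that a proper colouring of the prime quotient $g(x)$ together with colourings of the blow-ups of its parts can be amalgamated into a proper colouring of $G[x]$ whose palette size multiplies the two contributions, and that $\omega(g(x))$ and the $\omega$'s of the blow-ups sit inside $\omega(G[x])$ compatibly (this uses that in a prime quotient a clique lifts to a clique using at most one vertex per part, so the relevant $\omega$'s are simultaneously realised). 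Once the alternation normalisation of $T$ is in place, the series and parallel steps are the routine product/maximum arguments sketched above, and summing the geometric contributions of the levels gives the claimed $\omega^{2k+3}$ bound.
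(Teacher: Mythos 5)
Your high-level framing (recasting the Chudnovsky--Penev--Scott--Trotignon argument as a statement about the modular decomposition tree and colouring bottom-up) matches the paper's promise, but the mechanism you sketch for controlling the exponent does not work and misses the paper's key ideas. The central difficulty you have to address is a root-to-leaf chain of \emph{prime} nodes along which the local clique number does not decrease. A naive bottom-up product colouring at a prime node $x$ gives $\chi(G[x]) \le \omega(g(x))^k \cdot \max_i \chi(G[x_i])$, and if $\omega(G[x_i]) = \omega(G[x])$ for the heavy child, the induction gains nothing and the exponent grows linearly in the number of levels. Contracting runs of series/parallel nodes so that they alternate, and amortising ``$+1$ overheads'' over pairs of levels, does not touch this: the $\omega^k$ factor comes precisely from prime nodes, and your scheme offers no reason why the accumulated product of those factors is polynomial in $\omega(G)$.

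What the paper does instead is to neutralise exactly this difficulty. It introduces \emph{independent} tree-decompositions (no two internal siblings adjacent in their parent's quotient) and a notion of \emph{depth} counting non-isolated ancestors, proves that depth is at most $\omega - 1$, and shows that a graph with an independent $\cal C$-tree-decomposition is $\omega^{k+1}$-colourable by using $(\text{depth}, \alpha_x(\cdot))$ as colour. Then, in the main induction (on depth), it takes the set $X$ of tree nodes $x$ with $\omega(G_x) > \omega(G)/2$: two such siblings cannot be adjacent in the parent's quotient, so $T[X]$ (padded by children) gives an \emph{independent} decomposition of an induced subgraph $H$, efficiently colourable. Below $X$ the clique numbers have halved, so one partitions the substituted pieces dyadically by clique number and sums a geometric series, yielding the $\omega^{2k+3}$ bound. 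The two ingredients you are missing are (i) the independence observation for the subtree of nodes with clique number $> \omega/2$, together with the depth-based colouring lemma, and (ii) the dyadic partition that converts ``halved clique number below $X$'' into a convergent sum. Without these, the induction you propose has no decreasing quantity to run on.
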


Given a class of graphs $\cal C$, a \emph{$\cal C$-tree-decomposition} is a pair $(T,g)$ in which $T$ is a rooted tree and $g$ is a function associating to every internal node $x$ of $T$ a graph $g(x)\in \cal C$ whose vertices are the children of $x$. The \emph{realization} $R(T,g)$ is the graph such that: 
\begin{itemize}
    \item its vertex set is the set of leaves $L$ of $T$
    \item two vertices $x,y\in L$ are joined by an edge if, given that $z$ is their closest ancestor in $T$ and $x',y'$ are the respective children of $z$ which are the ancestors of $x,y$, the edge $x'y'$ belongs to $g(z)$.
\end{itemize}

Given a class $\cal C$, we denote by ${\cal C}_s$ the class of all $R(T,g)$ where $(T,g)$ is a $\cal C$-tree-decomposition. For instance, if $\cal C$ is the class of cliques and independent sets, then  ${\cal C}_s$ is the class of cographs.
We say that $(T,g)$ is \emph{independent} if whenever $x, y$ are internal nodes of $T$ with the same parent $z$, then $xy$ is not an edge of $g(z)$. We denote by ${\cal C}_i$ the class of all $R(T,g)$ where $(T,g)$ is an independent $\cal C$-tree-decomposition. \\
Let $\cal C$ be a class of graphs, and $(T, g)$ be a $\cal C$-tree-decomposition. 
Let $u$ be any node of $T$ which is not the root, and let $v$ be the parent of $u$. We say that $u$ is \emph{isolated} in $T$ if it is isolated in $g(v)$. \\
The \emph{depth} $d(x)$ of a node $x$ of $T$ is the number of strict ancestors of $x$ in $T$ that are not isolated.
The \emph{depth} $d(T, g)$ of $(T, g)$ is the maximum depth of a leaf of $T$. 
Finally, if $G \in \mathcal{C}_s$, the \emph{depth} of $G$ is the minimum depth of a $\cal C$-tree-decomposition $(T, g)$ such that $G = R(T, g)$.

\begin{lemma}\label{lem:depth}
For every class $\cal C$, if $(T, g)$ is a $\mathcal{C}$-tree-decomposition of depth $d$, then $\omega(R(T, g)) \geq d + 1$.
\end{lemma}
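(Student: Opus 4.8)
The plan is to prove the lemma by induction on the number of nodes of $T$. The base case is when $T$ is a single node: then $R(T,g)$ is a single vertex, the unique leaf has no strict ancestors so $d(T,g)=0$, and $1\ge 0+1$ holds. For the inductive step, let $r$ be the root, with children $c_1,\dots,c_p$, and let $(T_i,g)$ denote the restriction of the decomposition to the subtree rooted at $c_i$. The first observation to record is that $R(T_i,g)$ is exactly the subgraph of $R(T,g)$ induced on the leaves descending from $c_i$ (closest common ancestors, and hence adjacencies, are unchanged), so $\omega(R(T,g))\ge \omega(R(T_i,g))$ for every $i$.

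Now pick a leaf $\ell$ realizing the depth $d:=d(T,g)$; since $r$ is not a leaf, $\ell$ descends from some $c_i$. The edge case $\ell=c_i$ forces $d=0$ (the only strict ancestor of $\ell$ is $r$, which is not counted) and the bound is trivial, so assume $\ell\neq c_i$, whence $(T_i,g)$ has strictly fewer nodes and the induction hypothesis applies to it. The key bookkeeping step is to compare the depth of $\ell$ in $T$ with its depth $a$ in $T_i$: for any node strictly between $c_i$ and $\ell$, being isolated is the same whether computed in $T$ or in $T_i$ (same parent, same $g$), and $c_i$ itself is not counted in $a$ (it is the root of $T_i$) and is not counted for $r$ either; consequently $d=a$ if $c_i$ is isolated in $g(r)$, and $d=a+1$ otherwise. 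In both cases $a\le d(T_i,g)$.

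If $c_i$ is isolated in $g(r)$ we are done immediately: by induction $\omega(R(T_i,g))\ge d(T_i,g)+1\ge a+1=d+1$. If $c_i$ is not isolated in $g(r)$, then $d=a+1\ge 1$ and $d(T_i,g)\ge a=d-1$, so by induction $R(T_i,g)$ contains a clique $Q$ with $|Q|\ge d$. Since $c_i$ is not isolated in $g(r)$, there is a child $c_j$ with $j\neq i$ and $c_ic_j\in E(g(r))$; choose any leaf $\ell'$ descending from $c_j$. For every leaf $v$ descending from $c_i$, the closest common ancestor of $v$ and $\ell'$ is $r$, and they descend through $c_i$ and $c_j$ respectively, so $v\ell'\in E(R(T,g))$. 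Thus $\ell'$ is complete to $Q$, and $\ell'\notin Q$ because it lies in a different subtree; hence $Q\cup\{\ell'\}$ is a clique of size at least $d+1$ in $R(T,g)$, completing the induction.

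The main thing to get right — and the reason a one-shot argument does not work — is that the obvious attempt (for each non-isolated strict ancestor of $\ell$, pick a witness neighbour of $\ell$) only produces a large \emph{star} around $\ell$, since those witnesses need not be pairwise adjacent. The induction circumvents this by assembling the clique entirely inside a single child-subtree and then appending exactly one vertex drawn from a sibling subtree, which by construction sees that whole subtree; the bulk of the care is just in the off-by-one accounting of depth across the root/child boundary and in isolating the trivial cases $\ell=c_i$ and $d=0$.
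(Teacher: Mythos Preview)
Your inductive proof is correct. However, your closing paragraph is mistaken: the ``obvious attempt'' you dismiss is precisely the paper's proof, and it does work. If $y_1,\dots,y_d$ are the non-isolated strict ancestors of a deepest leaf $x$ (say with $y_1$ nearest the root), $z_i$ the parent of $y_i$, $w_i$ a sibling of $y_i$ with $y_iw_i\in E(g(z_i))$, and $x'_i$ any leaf below $w_i$, then $\{x,x'_1,\dots,x'_d\}$ is a genuine clique, not merely a star centred at $x$. Indeed, for $i<j$ the leaf $x'_j$ lies below $y_i$ (because $w_j$ is below $z_j$, which is an ancestor of $x$ at or below $y_i$), while $x'_i$ lies below the sibling $w_i$ of $y_i$; hence their closest common ancestor is $z_i$ and the children of $z_i$ through which they descend are $y_i$ and $w_i$, adjacent in $g(z_i)$. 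So the witnesses \emph{are} pairwise adjacent.

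The paper's direct construction is therefore shorter and bypasses all of the root/child depth bookkeeping that your induction has to track; your approach is a valid alternative, but the commentary undersells the one-shot argument.
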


\begin{proof} We consider a leaf $x$ with depth $d$. We denote by $y_1,\dots, y_d$ the non-isolated ancestors of $x$, and by $z_1,\dots, z_d$ their respective parents. We also denote by $w_1,\dots, w_d$ some respective children of $z_1,\dots, z_d$ such that $y_iw_i$ is an edge in $g(z_i)$. We now pick some leaves $x'_1,\dots ,x'_d$ which are respective descendants of $w_1,\dots, w_d$. Observe that $x,x'_1,\dots ,x'_d$ is a clique of $R(T, g)$.
\end{proof}

\begin{theorem}\label{th:inddec}
If $\cal C$ is \pcb~with function $\chi \leq \omega^k$, then ${\cal C}_i$ is \pcb~with function $\chi \leq \omega^{k+1}$
\end{theorem}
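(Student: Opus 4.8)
The plan is to induct on the depth of an independent $\mathcal{C}$-tree-decomposition $(T,g)$ of $G$, using Lemma~\ref{lem:depth} to convert a depth bound into a clique bound. Concretely, I will show that if $(T,g)$ is independent of depth $d$, then $\chi(R(T,g)) \leq \omega(G)^k \cdot (d+1)$; combined with $d+1 \leq \omega(G)$ from Lemma~\ref{lem:depth} this gives $\chi(G) \leq \omega(G)^{k+1}$, as desired. So the real content is the inductive bound $\chi \leq \omega^k(d+1)$ on depth.

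First I would set up the induction on $d$. For $d=0$ every leaf is an isolated node in its parent's graph, so $R(T,g)$ is edgeless and $\chi = 1 = \omega^k \cdot 1$ (taking $\omega \geq 1$). For the inductive step, consider the root $r$ of $T$ with $g(r)$ a graph on the children $c_1,\dots,c_m$ of $r$; since $(T,g)$ is independent, the edges of $g(r)$ only go between pairs $c_i,c_j$ where \emph{at least one is a leaf of $T$} — no, more precisely independence says two \emph{internal} children are never adjacent in $g(r)$, so $g(r)$ has all its edges incident to a leaf-child or between two leaf-children. The key structural observation: let $A \subseteq \{c_1,\dots,c_m\}$ be the leaf-children and $B$ the internal children. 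Then in $R(T,g)$, the vertex set splits as $(\text{vertices }A) \cup \bigcup_{c \in B} R(T_c, g)$ where $T_c$ is the subtree at $c$; the subgraph induced on $A$ is $g(r)[A] \in \mathcal{C}$ (hereditary), so has chromatic number $\leq \omega(G)^k$; each $R(T_c,g)$ for $c \in B$ is an independent $\mathcal{C}$-tree-decomposition of depth $\leq d-1$ (because $c$ is a non-isolated node only if it has an edge in $g(r)$, but all of $c$'s edges in $g(r)$ go to leaf-children, and $c$ is internal... I need to be careful: $c \in B$ contributes $1$ to the depth of its descendants iff $c$ is non-isolated in $g(r)$, i.e.\ has a neighbour among $A$), so by induction $\chi(R(T_c,g)) \leq \omega(G)^k \cdot d$ when $c$ is non-isolated and $\leq \omega(G)^k(d-1+1)$ regardless.

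The coloring is then assembled as follows: colour $G[A]$ with a palette $P_0$ of size $\omega(G)^k$; colour each $R(T_c,g)$ with its own palette. The issue is that vertices in different $R(T_c,g)$ and vertices in $A$ may be adjacent in $R(T,g)$ — namely a vertex $v \in R(T_c,g)$ is adjacent to $u \in A$ iff $c u \in g(r)$, and then $v$ is adjacent to \emph{all} of the vertices below... wait, $u$ is a single vertex (leaf), so $v$ adjacent to $u$. And $v \in R(T_c,g)$, $v' \in R(T_{c'},g)$ with $c \neq c'$ are never adjacent since $cc' \notin g(r)$ by independence (both internal). So the only cross edges are between $\bigcup_{c\in B} R(T_c,g)$ and $A$. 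Here I would use the standard trick: give each $R(T_c,g)$ with $c$ non-isolated a fresh copy of a palette of size $\omega(G)^k \cdot d$ disjoint from $P_0$ — but that blows up with $|B|$. Instead, I reuse: since distinct $R(T_c,g)$'s are mutually non-adjacent, they can all share \emph{one} palette $P_1$ of size $\omega(G)^k \cdot d$; a conflict only arises between $P_1$-coloured vertices and $P_0$-coloured vertices of $A$. To resolve this with only $\omega(G)^k(d+1) = |P_0| + |P_1|$ colours total, note $P_0$ and $P_1$ are already disjoint, so the union colouring is proper. Total: $\omega(G)^k + \omega(G)^k \cdot d = \omega(G)^k(d+1)$.

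The main obstacle I anticipate is the bookkeeping around which children of $r$ are isolated versus non-isolated and how that interacts with the depth parameter — specifically, ensuring the subtrees $R(T_c,g)$ genuinely have depth $\leq d-1$ when $c$ is non-isolated, and depth $\leq d$ (harmlessly) when $c$ is isolated, and that in the isolated case such a $c$ has \emph{no} neighbours in $g(r)$ so its subtree contributes no cross edges to $A$ at all and can be folded in freely. Once that dichotomy is pinned down, the palette accounting above is routine. A secondary subtlety is the base of the exponent: $\chi(g(r)[A]) \leq \omega(g(r)[A])^k \leq \omega(G)^k$ requires $\mathcal{C}$ hereditary (so $g(r)[A] \in \mathcal{C}$) and $\omega(g(r)[A]) \leq \omega(R(T,g))$, which holds because a clique in $g(r)[A]$ lifts to a clique among the corresponding leaves of $G$.
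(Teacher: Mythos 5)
Your proposal is correct and takes essentially the same approach as the paper: both stratify the colouring by depth (via Lemma~\ref{lem:depth}) and use the local $\omega^k$-colourings $\alpha_x$ within each stratum. The paper assigns $c(v)=(d(v),\alpha_{\mathrm{parent}(v)}(v))$ in one shot and checks properness by the same leaf/internal dichotomy at the nearest common ancestor that you use; your recursive palette bookkeeping, once the isolated-versus-non-isolated accounting you flag is pinned down, unrolls to exactly that colouring.
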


\begin{proof}
Let $G \in \mathcal{C}_i$. There exists an independent $\cal C$-tree-decomposition $(T, g)$ such that $G = R(T, g)$. By Lemma \ref{lem:depth}, we have $\omega(G) \geq d(T, g) + 1$. For every internal node $x$, we have $g(x) \in \cal C$ and $\omega(g(x)) \leq \omega(G)$, thus $g(x)$ is $\omega(G)^k$-colorable. For every internal node $x$, let $\alpha_x$ be an $\omega(G)^k$-coloring of $g(x)$. Let $v$ be any vertex of $G$ (equivalently $v$ is a leaf of $T$). If $T$ has only one node, then $G$ is the graph on a single vertex, so $G$ is $\omega(G)^{k+1}$-colorable. Otherwise, let $x$ be the parent of $v$ in $T$. We set $c(v) = (d(v), \alpha_x(v))$, where $d(v)$ is the depth of $v$ in $(T, g)$. \\
We claim that $c$ is a proper $\omega(G)^{k+1}$-coloring of $G$. First, for every vertex $v$ of $G$, we have $0 \leq d(v) \leq d(T, g) \leq \omega(G) - 1$, so we indeed use at most $\omega(G)^{k+1}$ colors. Then, let $uv$ be an edge of $G$. Let $z$ be the closest ancestor of $u$ and $v$ in $T$, and $u', v'$ be the respective children of $z$ which are the ancestors of $u, v$. The edge $u'v'$ belongs to  $g(z)$. Since $(T, g)$ is an independent $\cal C$-tree-decomposition, this implies that either $u'$ or $v'$ is a leaf (or both). If they are both leaves, we have $u = u'$ and $v = v'$, therefore $u$ and $v$ are children of $z$ which are adjacent in $g(z)$. Thus, $\alpha_z(u) \neq \alpha_z(v)$ so $c(u) \neq c(v)$. If only one of $u', v'$ is a leaf, say $u'$, then we have that $v'$ is a strict ancestor of $v$ which is not isolated, so $d(v) > d(v') = d(u') = d(u)$ hence $c(u) \neq c(v)$. 
\end{proof}
\begin{proof}[Proof of Theorem \ref{th:chibordec}. ] The proof is by induction on the depth of $G$. \\
If $G$ has depth 0, then $G$ is a disjoint union of graphs of $\cal C$, so $\chi(G) \leq \omega(G)^k \leq \omega(G)^{2k+3}$.
Now, suppose $G$ has depth 1. By a previous lemma, we have $\omega(G) \geq 2$. We can assume that $G$ is connected since we can deal with each connected component separately. Then $G$ can be obtained by starting from a graph $G_0 \in \mathcal{C}$ on vertex set $v_1, \ldots, v_n$ and substituting each $v_i$ by a graph $G_i \in \mathcal{C}$. Let $X = \{v_i, \omega(G_i) \leq \sqrt{\omega(G)}\}$ and $Y = V(G_0) \setminus X$. Let $G_X$ be the subgraph of $G$ induced by the leaves which are descendant of nodes of $X$ and $G_Y$ be the graph induced on the other vertices. The graph $G_X$ is obtained by substituting graphs of $\mathcal{C}$ of clique number at most $\sqrt{\omega(G)}$ inside a graph of $\mathcal{C}$ of clique number at most $\omega(G)$. Hence, $G_X$ can be colored with at most  $\omega(G)^{1.5k}$ colors (consider the product of the colorings). Similarly, the graph $G_Y$ is obtained by substituting graphs of $\mathcal{C}$ of clique number at most $\omega(G)$ inside a graph of $\mathcal{C}$ of clique number at most $\sqrt{\omega(G)}$ so it can be colored with at most  $\omega(G)^{1.5k}$ colors. Hence $G$ can be colored with $2\omega(G)^{1.5k} \leq \omega(G)^{1.5k+1} \leq \omega(G)^{2k+3}$ colors. \\

Now, suppose that $G$ has depth at least 2. Thus, $\omega(G) \geq 3$. Let $(T, g)$ be a $\mathcal{C}$-tree-decomposition of $G$. For every node $x$ of $T$, let $T_x$ be the subtree of $T$ that is rooted in $x$, $g_x$ be the restriction of $g$ to the internal nodes of $T_x$, and $G_x = R(T_x, g_x)$. Finally, let $X = \{x, \omega(G_x) > \omega(G)/2\}$. Note that if $z$ is the parent of $x$ in $T$, then $G_x$ is an induced subgraph of $G_z$, so if $x \in X$, then we also have $z \in X$. In particular, $T[X]$ is a subtree of $T$. Let $C(X)$ be the set of children of an element of $X$ in $T$ and $Y = X \cup C(X)$. Let $T' = T[Y]$. $T'$ is also a subtree of $T$. Let $g'$ be the restriction of $g$ to the internal nodes of $T'$, and let $H = R(T', g')$. Note that $H$ is an induced subgraph of $G$ so $\omega(H) \leq \omega(G)$. Now, suppose $x, y$ are internal nodes of $T'$ with the same parent $z$. Then, $x, y \in X$ so $\omega(G_x), \omega(G_y) > \omega(G)/2$. Thus, $x$ and $y$ cannot be adjacent in $g'(z)$ otherwise we would have $\omega(H) > \omega(G)$. This means that $(T', g')$ is an independent $\mathcal{C}$-tree-decomposition of $H$, so $H \in \mathcal{C}_i$. Thus, $H$ is $\omega(H)^{k+1} \leq \omega(G)^{k+1}$-colorable. \\
Let $v_1, \ldots, v_n$ be the vertices of $H$. We have that $G$ can be obtained by substituting each $v_i$ by some $G_{v_i} \in \mathcal{C}_s$. We can assume that $G$ is connected since we can deal with each connected component separately. Thus, we get that for every $i \in [n]$, $d(G_{v_i}) < d(G)$ (where $d$ stands for the depth). Futhermore, if $v_i$ is a vertex of $H$, it means that it is a leaf of $T'$. Indeed, since $\omega(G) \geq 3$, if $v_i \in X$ then $\omega(G_{v_i}) \geq 2$ so $v_i$ has children in $T$. Thus, $v_i \notin X$, which means $\omega(G_{v_i}) \leq \omega(G)/2$. Let $\omega_i = \omega(G_{v_i})$. By induction hypothesis, we have $\chi(G_{v_i}) \leq \omega_i^{2k+3}$. \\
Let $m = \lfloor\log(\omega(G))\rfloor$. For $j \in [m]$, let $V_j = \{v_i: \frac{\omega(G)}{2^{j+1}} < \omega_i \leq \frac{\omega(G)}{2^j}\}$. Note that the $V_j$'s partition  $\{v_1, \ldots, v_n\}$. For $j \in [m]$, let $H_j = H[V_j]$ and $G_j$ be the corresponding induced subgraph of $G$. 

If $j \in [m]$ and $v_i \in V_j$ then $\omega_i > \frac{\omega(G)}{2^{j+1}}$ so $\omega(H_j) \leq 2^{j+1}$. Thus, $\chi(H_j) \leq 2^{(j+1)(k+1)}$ because $H_j$ is an induced sugraph of $H\in \mathcal{C}_i$. Thus, we have $\chi(G_j) \leq 2^{(j+1)(k+1)} \cdot \left(\frac{\omega(G)}{2^{j}}\right)^{2k+3}$ (by doing the product of the colorings). Now, let's color separately each $V_j$ with a different palette. The number of colors used is: \begin{align*}
    \sum_{j=1}^m \chi(G_j) &\leq \sum_{j = 1}^m 2^{(j+1)(k+1)} \cdot \left(\frac{\omega(G)}{2^{j}}\right)^{2k+3} \\
                            &= 2^{k+1} \omega(G)^{2k+3} \sum_{j = 1}^m 2^{-j(k+2)} \\
                            &\leq 2^{k+1} \omega(G)^{2k+3} \frac{2^{-(k+2)}}{1-2^{-(k+2)}} \\
                            &\leq \omega(G)^{2k+3} \times \frac{2^{k+2}}{2(2^{k+2} - 1)} \\
                            &\leq \omega(G)^{2k+3}.
\end{align*}
\end{proof}
\section{Mixed extensions}

We conclude with another type of class extension which preserves  $\chi$-boundedness. Given an ordered graph $G$ with vertex ordering  $v_1,\dots ,v_n$ (denoted by $<$), we only ask here that the restriction to "mixed parts" of $G$ are $\chi$-bounded. Precisely, given a class $\cal C$ of graphs, we say that $(G,<)$ is a \emph{$\cal C$-mixed extension} if for every partition $\cal P$ into intervals $I_1,\dots ,I_k$, the subgraph $\text{Mix}(G,<,\cal P)$ of $G$ in which we only keep edges between intervals $I_s,I_t$ for which $I_s,I_t$ is mixed, belongs to $\cal C$. The class of graphs $G$ admitting a vertex-ordering $<$ such that $(G,<)$ is a $\cal C$-mixed extension is denoted by ME$(\cal C)$.

Observe that graphs with twin-width at most $d$ are mixed extensions of the class of graphs with chromatic number at most $f(d)$. To see this, observe that they admit vertex-orderings with no large mixed-minors, hence the Marcus-Tardos theorem implies that every partition $\cal P$ has a linear number of mixed zones. So $\text{Mix}(G,<,\cal P)$ has bounded chromatic number by degeneracy. Hence the fact that bounded twin-width graphs are $\chi$-bounded is a particular case of the following result (whose proof mimics the one of Proposition~\ref{prop:RMPpcb}):

\begin{theorem}\label{th:chimix}
If $\cal C$ is $\chi$-bounded, then ME$(\cal C)$ is $\chi$-bounded.
\end{theorem}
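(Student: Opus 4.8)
The plan is to mimic the inductive structure of the proof of the clique-number lemma for right module partitions and of Proposition~\ref{prop:RMPpcb}, but in the setting of mixed extensions. The key point is that contracting intervals of an ordered graph only collapses mixed zones into edges in a controlled way, so a large clique in a suitably contracted quotient forces a large ``mixed pattern'' which in turn, via Theorem~\ref{th:chimix}'s hypothesis on $\cal C$, is bounded in terms of $\omega(G)$. Concretely, given $(G,<)\in \text{ME}(\cal C)$ with $\chi$-bounding function $f$ for $\cal C$ and clique number $\omega:=\omega(G)$, I would first reduce, exactly as in the RMP argument, to controlling $\omega$ of an appropriate interval-quotient of $G$: it suffices to find a partition $\cal P$ of $(G,<)$ into intervals such that $\chi(G)$ is bounded by a function of $\chi(\text{Mix}(G,<,\cal P))$ and of the chromatic numbers of the pieces $G[I]$, $I\in \cal P$, with each $G[I]$ having strictly smaller clique number — this lets induction on $\omega$ close the loop.

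First I would set up the induction on $\omega(G)$, the base case $\omega=1$ being trivial (edgeless graph). For the inductive step, I would greedily build a partition into intervals $I_1,\dots,I_k$ from left to right: let $I_1$ be the shortest prefix with $\omega(G[I_1])=\omega$, then restart on the remaining suffix, etc. By construction each $G[I_j]$ has an interval $I_j'$ with $\omega(G[I_j'])=\omega$ placed as a suffix, and — crucially — the prefixes dropped along the way have clique number at most $\omega-1$, to which the induction hypothesis applies after re-running the whole construction on them. The heart of the matter is then to bound the number $k$ of parts: one shows that $\text{Mix}(G,<,\cal P)$, whose vertices can be taken to be the parts $I_j$ (after substituting each $I_j$ by a stable set, which preserves $\chi$ and $\omega$), has clique number bounded in terms of $\omega$. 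For this I would argue that if the parts $I_{j_1},\dots,I_{j_\ell}$ induce a clique in $\text{Mix}(G,<,\cal P)$, i.e. every pair among them is mixed, then picking one vertex from each part exhibits an induced subgraph of $G$ lying in $\cal C$ (it is a mixed subgraph for the trivial interval partition of $G$ restricted to these singletons), and moreover inside this family one can, using the same domination/minimality trick as in the earlier lemmas, either find a sub-clique of size $\omega$ among consecutive parts with the ``left distinguishing vertex'' structure, or peel off a part and recurse on a smaller clique number. This yields $\omega(\text{Mix}(G,<,\cal P))\le \psi(\omega)$ for some function $\psi$ depending only on $f$ (through the size $h$ of a graph not in $\cal C$).

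Putting it together: $\text{Mix}(G,<,\cal P)\in\cal C$ so $\chi(\text{Mix}(G,<,\cal P))\le f(\psi(\omega))$; this colouring, combined via a product with a colouring of the ``inside-intervals'' graph $\bigcup_j G[I_j]$ (whose chromatic number is handled by the induction hypothesis since each $G[I_j]$ — or rather each piece of the recursive decomposition of it — has clique number at most $\omega-1$ or is itself the suffix $I_j'$ which we re-decompose), bounds $\chi(G)$ by a function of $\omega$ alone. The main obstacle I anticipate is the same one that blocked the polynomial case for RMP: making the bookkeeping of ``which vertex to the left distinguishes which consecutive parts'' rigorous so that a clique in the quotient genuinely lifts to the required pattern in $\cal C$, and ensuring the recursion on clique number inside the mixed quotient terminates with the right dependence. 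Since we only want $\chi$-boundedness (not a polynomial bound), we have the freedom to let $\psi$ and the final bounding function grow as fast as needed, which is exactly what makes this tractable whereas the polynomial analogue remained open.
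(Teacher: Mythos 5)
There is a genuine gap, and it stems from a misreading of the object $\text{Mix}(G,<,\mathcal{P})$. In the paper's definition, $\text{Mix}(G,<,\mathcal{P})$ is a \emph{spanning subgraph} of $G$ on the original vertex set, obtained by deleting all edges not lying between a mixed pair of intervals. You treat it as an interval \emph{quotient} (``whose vertices can be taken to be the parts $I_j$''), analogous to $G/\mathcal{P}$ in the RMP section, and you therefore spend the bulk of the proposal trying to bound the clique number of this quotient by a function $\psi(\omega)$. That entire detour is unnecessary: since $\text{Mix}(G,<,\mathcal{P})$ is a subgraph of $G$, its clique number is automatically at most $\omega(G)$, and since $\text{Mix}(G,<,\mathcal{P})\in\mathcal{C}$ by hypothesis, its chromatic number is at most $f(\omega(G))$ directly. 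No analogue of the $\phi$-function argument from the RMP lemma is needed; that lemma was required there precisely because $G/\mathcal{P}$ is \emph{not} a subgraph of $G$ and can blow up $\omega$.

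Moreover, the substitute argument you sketch for bounding the quotient's clique number would not go through: if you pick one vertex from each of $\ell$ pairwise-mixed intervals and then take the ``mixed subgraph for the trivial interval partition'' on those $\ell$ singletons, every zone of that partition is a $1\times 1$ matrix and hence not mixed, so the resulting graph is edgeless. You cannot extract a forbidden pattern in $\mathcal{C}$ this way. The correct closing step is instead the paper's simple observation about the greedy partition: since each $G[I_j]$ with $j<k$ contains an $\omega$-clique, any edge between $I_s$ and $I_t$ ($s<t<k$) must lie in a mixed pair — otherwise one interval would be a module over the other and we could grow a clique of size $\omega+1$. Hence all cross-interval edges among $I_1,\dots,I_{k-1}$ belong to $\text{Mix}(G,<,\mathcal{P})$, which is $f(\omega)$-colourable, while each $I_j$ minus its last vertex has clique number $\leq\omega-1$ and is handled by induction. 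Your greedy construction of the $I_j$'s and the induction on $\omega$ are the right ingredients; the gap is in what you do with $\text{Mix}$.
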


\begin{proof} Assume that all graphs $H$ in $\mathcal{C}$ satisfy $\chi(H)\leq f(\omega (H))$. Consider $G\in \cal C$ with clique number $\omega$ and assume that we have already proved that the chromatic number of every graph in $ME(\mathcal{C})$ with smaller value of $\omega$ is at most $c$. Consider $<$, a vertex ordering $v_1,\dots ,v_n$ of $G$ such that $(G,<)$ is a $\cal C$-mixed extension. Pick a smallest initial interval $I_1=v_1,\dots ,v_{i_1}$ such that $G[I_1]$ has a clique of size $\omega$. Then pick a smallest interval $I_2=v_{i_1+1},\dots ,v_{i_2}$ such that $G[I_2]$ has a clique of size $\omega$, otherwise pick all remaining vertices. Continue the process to get a partition $\cal P$ into intervals $I_1,\dots ,I_k$. Observe that if there is an edge between $I_s$ and $I_t$, where $s<t<k$, then $I_s,I_t$ is mixed, otherwise one of the intervals would be a module with respect to the other, and we would find a clique of size $\omega +1$. Therefore the chromatic number of $G$ is at most $(c+1)f(\omega )+c+1$ since $\chi(G[I_k])\leq c+1$ and  $G[I_1\cup \dots \cup I_{k-1}]$ is the edge union of two graphs, one of chromatic number at most $c+1$ inside the $I_i$'s, and one of chromatic number at most $f(\omega)$ across the $I_i$'s.
\end{proof}


\bibliographystyle{amsplain}


\begin{aicauthors}
\begin{authorinfo}[rom]
  Romain Bourneuf\\
  Univ. Bordeaux, CNRS, Bordeaux INP, LaBRI, UMR 5800, F-33400\\
  Talence, France\\
  romain\imagedot{}bourneuf\imageat{}ens-lyon\imagedot{}fr \\
  \url{https://perso.ens-lyon.fr/romain.bourneuf/}
\end{authorinfo}
\begin{authorinfo}[stef]
  Stéphan Thomassé\\
  Univ Lyon, EnsL, UCBL, CNRS, LIP, F-69342\\
  LYON Cedex 07, France\\
  stephan\imagedot{}thomasse\imageat{}ens-lyon\imagedot{}fr \\
  \url{https://perso.ens-lyon.fr/stephan.thomasse/}
\end{authorinfo}
\end{aicauthors}

\end{document}